\documentclass{sig-alt-full}
\usepackage{url}
\usepackage{epsf}
\usepackage{graphicx}
\usepackage{amsfonts}
\usepackage{amssymb}
\usepackage{latexsym}
\usepackage{setspace}



\newtheorem{theorem}{{\bf Theorem}}

\newtheorem{lemma}[theorem]{Lemma}

\newcommand{\R}{\mathcal R}

\newcommand{\ssp}{{\tt sp}}


\title{Local Approximation Schemes for Topology Control}
\numberofauthors{2}
\author{
\alignauthor Mirela Damian \\
\affaddr{Dept. of Comp. Sci, Villanova Univ.}\\
\affaddr{Villanova, PA 19085} \email{mirela.damian@villanova.edu}
\alignauthor Saurav Pandit
~~~Sriram Pemmaraju\\
\affaddr{Dept. of Comp. Sci, Univ. of Iowa}\\
\affaddr{Iowa City, IA 52242-1419} \\
\email{[spandit, sriram]@cs.uiowa.edu} }

\begin{document}


\conferenceinfo{Altered version of PODC'06, with a few typos fixed.}
{July 22-26, 2006, Denver, Colorado, USA.}

\CopyrightYear{2006}

\crdata{1-59593-384-0/06/0007}

\maketitle

\begin{abstract}
This paper presents a distributed algorithm on wireless ad-hoc
networks that runs in polylogarithmic number of rounds in the size
of the network and constructs a linear size, lightweight,
$(1+\varepsilon)$-spanner for any given $\varepsilon > 0$. A
wireless network is modeled by a $d$-dimensional $\alpha$-quasi unit
ball graph ($\alpha$-UBG), which is a higher dimensional
generalization of the standard unit disk graph (UDG) model. The
$d$-dimensional $\alpha$-UBG model goes beyond the unrealistic
``flat world'' assumption of UDGs and also takes into account
transmission errors, fading signal strength, and physical
obstructions. The main result in the paper is this: for any fixed
$\varepsilon > 0$, $0 < \alpha \le 1$, and $d \ge 2$ there is a
distributed algorithm running in $O(\log n \cdot \log^* n)$
communication rounds on an $n$-node, $d$-dimensional $\alpha$-UBG
$G$ that computes a $(1+\varepsilon)$-spanner $G'$ of $G$ with
maximum degree $\Delta(G') = O(1)$ and total weight $w(G') =
O(w(MST(G))$. This result is motivated by the topology control
problem in wireless ad-hoc networks and improves on existing
topology control algorithms along several dimensions. The technical
contributions of the paper include a new, sequential, greedy
algorithm with relaxed edge ordering and lazy updating, and
clustering techniques for filtering out unnecessary edges.
\end{abstract}

\smallskip
\noindent {\bf Categories and Subject Descriptors:}
C.2.4 [Computer-Communication Networks]: Distributed Systems

\noindent
{\bf General Terms:} Algorithms, Performance, Theory.

\noindent
{\bf Keywords:} Spanners, Topology control, Wireless ad-hoc networks,
Unit ball graphs.

\section{Introduction}

Let $G = (V, E)$ be a graph with edge weights $w : E \to \R^+$. For
$t \ge 1$, a $t$-spanner of $G$ is a spanning subgraph $G'$ of $G$
such that for all pairs of vertices $u, v \in V$, the length of a
shortest $uv$-path in $G'$ is at most $t$ times the length of a
shortest $uv$-path in $G$. The problem of constructing a sparse
$t$-spanner, for small $t$, of a given graph $G$ has been
extensively studied by researchers in distributed computing and
computational geometry and more recently by researchers in ad-hoc
wireless networks. In this paper we present a fast distributed
algorithm for constructing a linear size, lightweight $t$-spanner of
bounded degree for any given $t > 1$, on wireless networks. Below,
we describe our result more precisely.

\subsection{Network model}
We model wireless networks using $d$-dimensional quasi unit ball
graphs. For any fixed $\alpha$, $0 < \alpha \le 1$ and integer $d
\ge 2$, a {\em $d$-dimensional $\alpha$-quasi unit ball graph\/}
({\em $\alpha$-UBG}, in short) is a graph $G = (V, E)$ whose vertex
set $V$ can be placed in one-one correspondence with a set of points
in the $d$-dimensional Euclidean space and whose edge set $E$
satisfies the constraint: if $|uv| \le \alpha$ then $\{u, v\} \in E$
and if $|uv| > 1$ then $\{u, v\} \not\in E$. Here we use $|uv|$ to
denote the Euclidean distance between the points corresponding to
vertices $u$ and $v$. The $\alpha$-UBG model does not prescribe
whether a pair of vertices whose distance is in the range $(\alpha,
1]$ are to be connected by an edge or not. This is an attempt to
take into account transmission errors, fading signal strength, and
physical obstructions. Our algorithm does not need to know the
locations of nodes of the $\alpha$-UBG in $d$-dimensional Euclidean
space; just the pairwise Euclidean distances.

The $\alpha$-UBG model is a higher dimensional generalization of the
somewhat simplistic unit disk graph (UDG) model of wireless networks
that is popular in literature. Specifically, when $\alpha = 1$ and
$d = 2$, a $d$-dimensional $\alpha$-UBG is just a UDG. UDGs are
attractive due to their mathematical simplicity, but have been
deservedly criticized for being unrealistic models of wireless
networks \cite{KotzNewportElliot}. In our view, $d$-dimensional
$\alpha$-UBGs are a significant step towards a more realistic model
of wireless networks. Two-dimensional $\alpha$-UBGs were proposed in
\cite{BarriereFraigniaudNarayanan} as a model of wireless ad-hoc
networks with unstable transmission ranges and the difficulty of
doing geometric routing in such networks was shown.

Our communication model is the standard synchronous message passing
model that does not account for channel access and collision issues.
In this communication model, time is divided into rounds. In each
round, each node can send a different message to each of its
neighbors, receive different messages from all neighbors and perform
arbitrary (polynomial) local computation. The length of messages
exchanged between nodes is logarithmic in the number of nodes. We
measure the cost of our algorithm in terms of the number of
communication rounds. Although this model is not widely considered
to be realistic, it is nevertheless interesting because it
demonstrates the locality of computations.

\newpage
\subsection{Our result}
For any edge weighted graph $J$, we use $w(J)$ to denote the sum of
the weights of all the edges in $J$ and $MST(J)$ to denote a minimum
weight spanning tree of $J$. For any fixed $\varepsilon > 0$, $0 <
\alpha \le 1$, and $d \ge 2$ our algorithm runs in $O(\log n \cdot
\log^* n)$ communication rounds on an $n$-node, $d$-dimensional
$\alpha$-UBG and computes a $(1+\varepsilon)$-spanner $G'$ of $G$
whose maximum degree $\Delta(G') = O(1)$ and whose total weight
$w(G') = O(w(MST(G))$. Since any spanner of $G$ has weight bounded
below by $w(MST(G))$, the weight of the output produced by the
algorithm is within a constant times the optimal weight. As far as
we know, our result significantly improves all known results of a
similar kind along several dimensions. More on this further below.

\subsection{Topology control}
Our result is motivated by the {\em topology control\/} problem in
wireless ad-hoc networks. For an overview of topology control, see
the survey by Rajaraman \cite{Rajaraman}. Since an ad-hoc network
does not come with fixed infrastructure, there is no topology to
start with and informally speaking, the topology control problem is
one of selecting neighbors for each node so that the resulting
topology has a number of useful properties. More precisely, let $V$
be a set of nodes that can communicate via wireless radios and for
each $v \in V$, let $N(v)$ denote the set of all nodes that $v$ can
reach when transmitting at maximum power. The induced digraph $G =
(V, E)$, where $E = \{\{u, v\} \mid v \in N(u)\}$, represents the
network in which every node has chosen to transmit at maximum power
and has designated every node it can reach as its neighbor. The
topology control problem is the problem of devising an efficient and
local protocol $P$ for selecting a set of neighbors $N_P(v)
\subseteq N(v)$ for each node $v \in V$. The induced digraph $G_P =
(V, E_P)$, where $E_P = \{\{u, v\} \mid v \in N_P(u)\}$ is typically
required the satisfy properties such as symmetry (if $v \in N_P(u)$
then $u \in N_P(v)$), sparseness ($|E_P| = O(|V|)$) or bounded
degree ($|N_P(v)| \le c$ for all nodes $v$ and some constant $c$),
and the spanner property. Sometimes stronger versions of
connectivity such as $k$-vertex connectivity or $k$-edge
connectivity (for $k > 1$) are desired, both for providing
fault-tolerance and for improving throughput
\cite{HajiaghayiImmorlicaMirrokni2002,HajiaghayiImmorlicaMirrokni}.
If the input graph consists of nodes in the plane, it is quite
common to require that the output graph be planar
\cite{LiCalinescuWan,LiCalinescuWanWang,LiWangIJCGA,WangLi,WattenhoferZollinger}.
This requirement is motivated by the existence of simple,
memory-less, geometric routing algorithms that guarantee message
delivery only when the underlying graph is planar \cite{KarpKung}.

Though the topology control problem is recent, there is already an
extensive body of literature on the problem to which the above sample
of citations do not do justice.
However, many of the topology control protocols that provide worst
case guarantees on the quality of the topology, assume that the
network is modeled by a UDG. A recent example \cite{LiWangIJCGA}
presents a distributed algorithm that requires a linear number of
communication rounds in the worst case to compute a planar
$t$-spanner of a given UDG with $t \approx 6.2$ and in which each
node has degree at most 25. These two constants can be slightly
tuned -- $t$ can be brought down to about 3.8 with a significant
increase in the degree bound. We improve on the result in
\cite{LiWangIJCGA} along several dimensions. As is generally known
among practitioners in ad-hoc wireless networks, the ``flat world''
assumption and the identical transmission range assumption of UDGs
are unrealistic \cite{KotzNewportElliot}. By using an $\alpha$-UBG
we significantly generalize our model of wireless networks,
hopefully moving much closer to reality. For any $\varepsilon > 0$,
our algorithm returns a $(1 + \varepsilon)$-spanner; as far as we
know, this is the first distributed algorithm that produces an {\em
arbitrarily good\/} spanner for an $\alpha$-UBG model of wireless
networks. We also guarantee that the total weight of the output is
within constant times optimal -- a guarantee that is not provided in
\cite{LiWangIJCGA}. Finally, using algorithmic techniques and
distributed data structures that might be of independent interest,
we ensure that our protocol runs in $O(\log n \cdot \log^* n)$
communication rounds. We are not aware of any topology control
algorithm that runs in poly-logarithmic number of rounds and
provides anywhere close to the guarantees provided by our algorithm.

\subsection{Spanners in computational geometry}
Starting in the early 1990's, researchers in computational geometry
have attempted to find sparse, lightweight spanners for complete
Euclidean graphs. Given a set $P$ of $n$ points in $\R^d$, the tuple
$(P, E)$, where $E$ is the set of line segments $\{\{p, p'\} \mid p,
p' \in P\}$, is called the {\em complete Euclidean graph\/} on $P$.
For any subset $E' \subseteq E$, $(P, E')$ is called a {\em
Euclidean graph\/} on $P$. The specific problem that researchers in
computational geometry have considered, is this. Given a set $P$ of
$n$ points in $\R^d$ and $t > 1$, compute a Euclidean graph on $P$
that is a $t$-spanner of the complete Euclidean graph on $P$, whose
maximum degree is bounded by $O(1)$ and whose weight is bounded by
the weight of a minimum spanning tree on $P$. For an early example,
see \cite{LevcopoulosLingas} in which the authors show that there
are ``planar graphs almost as good as the complete graphs and almost
as cheap as minimum spanning trees.'' This was followed by a series
of improvements
\cite{CzumajZhao,DasHeffernanNarasimhan,DasNarasimhan97,GudmundssonLevcopoulosNarasimhan},
with the most recent paper \cite{CzumajZhao} presenting algorithms
for constructing Euclidean subgraphs that provide the additional
property of $k$-fault tolerance. Most of the papers mentioned above
start with the following simple, greedy algorithm.

\begin{center}\begin{minipage}{\linewidth}
\hrule\hfill\\
\noindent Algorithm {\tt SEQ-GREEDY} ($G=(V,E),t$)
\hrule\hfill\\
\noindent
1. Order the edges in $E$ in non-decreasing order of length.\\
2. $E' \leftarrow \phi$, $G' \leftarrow (V,E')$\\
3. For each edge $e = \{u, v\} \in E$ if there is no $uv$-path in $G'$ of length at most $t\cdot |uv|$
\begin{enumerate}
\item[(a)] $E' \leftarrow E'\cup \{ e \}$
\item[(b)] $G' \leftarrow (V,E')$
\end{enumerate}
Output $G'$.
\smallskip
\hrule\hfill
\end{minipage}\end{center}

\noindent It is well-known \cite{DasNarasimhan97} that if the input
graph $G = (V, E)$ is the complete Euclidean graph, then the output
graph $G' = (V, E')$ produced by {\tt SEQ-GREEDY} has the following
useful properties: (i) $G'$ is a $t$-spanner of $G$, (ii)
$\Delta(G') = O(1)$, and (iii) $w(G') = O(w(MST(G)))$. A naive
implementation of {\tt SEQ-GREEDY} takes $O(n^3 \log n)$ time
because a quadratic number of shortest path queries need to be
answered on a dynamic graph with $O(n)$ edges. Consequentially,
papers in this area
\cite{DasNarasimhan97,GudmundssonLevcopoulosNarasimhan} focus on
trying to implement {\tt SEQ-GREEDY} efficiently. For example, Das
and Narasimhan \cite{DasNarasimhan97} show how to use certain kind
of graph clustering to answer shortest path queries efficiently,
thereby reducing the running time of {\tt SEQ-GREEDY} to $O(n \log^2
n)$. One of the contributions of this paper is to show how a variant
of the Das-Narasimhan clustering scheme can be implemented and
maintained efficiently, in a distributed setting.

\subsection{Summary of our contributions}
In obtaining the main result, our paper makes the following
contributions.
\begin{enumerate}
\item We first show that sparse, lightweight $t$-spanners for arbitrarily
small $t > 1$, not only exist for $d$-dimensional $\alpha$-UBGs, but
can be computed using {\tt SEQ-GREEDY}. Note that sparse
$t$-spanners for arbitrarily small values of $t \ge 1$ do not exist
for general graphs. For example, there is a classical
graph-theoretic result that shows that for any $t \ge 1$, there
exist (infinitely many) unweighted $n$-vertex graphs for which every
$t$-spanner needs $\Omega(n^{1 + 1/(t+2)})$ edges (see Page 179 in
\cite{Peleg}).

\item We then consider a version of {\tt SEQ-GREEDY} in which
the requirement that edges be considered in increasing order of
length is relaxed. More precisely, the edges are distributed into
$O(\log n)$ bins $B_0, B_1, B_2, \ldots$ such that edges in $B_i$
are all shorter than edges in $B_{i+1}$. It is then shown that {\em
any\/} ordering of the edges in which edges in $B_0$ come first,
followed by edges in $B_1$, followed by the edges in $B_2$, etc., is
good enough for the correctness of {\tt SEQ-GREEDY}, even for
$d$-dimensional $\alpha$-UBGs. More importantly, we show that the
update step in {\tt SEQ-GREEDY} (Step 3(a)) need not be performed
after each edge is queried. Instead, a more lazy update may be
performed, after each bin is completely processed. Being able to
perform a lazy update is critical for a distributed implementation;
roughly speaking, we want the nodes to query all edges in a bin in
parallel and not to have to rely on answers to queries on other
edges in a bin.

\item We also use a clustering technique as a way to reduce the number
of edges to be queried per node. Reducing the number of query edges
per node, is critical to being able to guarantee that the output of
our distributed version of {\tt SEQ-GREEDY} does not have too many
edges incident on a node.

\item Our next contribution is to show that this relaxed version of {\tt SEQ-GREEDY}
can be implemented in a distributed setting in $O(\log n)$ phases
--- one phase corresponding to each bin --- such that each phase
requires $O(\log^* n)$ communication rounds. Each phase requires the
computation of maximal independent sets (MIS) on some derived
graphs. We show that the derived graphs are unit ball graphs of {\em
constant doubling dimension\/} \cite{KuhnMoscibrodaWattenhofer} and
use the $O(\log^* n)$-round MIS algorithm of Kuhn et al \cite{KuhnMoscibrodaWattenhofer}.
\end{enumerate}

\subsection{Extensions to our main result} \label{subsection:extensions.to.main}

Here we briefly report on extensions to our main result that we have
obtained. They do not appear in this paper due to lack of space.

\begin{enumerate}
\item
Let $G = (V, E)$ be an edge-weighted graph. For any $t > 1$ and
positive integer $k$, a {\em $k$-vertex fault-tolerant
$t$-spanner\/} of $G$ is a spanning subgraph $G'$ if for each subset
$S$ of vertices of size at most $k$, $G[V' \setminus S]$ is a
$t$-spanner of $G[V \setminus S]$. A $k$-edge fault-tolerant
$t$-spanner is defined in a similar manner. Using ideas from
\cite{CzumajZhao} we can extend our algorithm to produce a
$k$-vertex (or a $k$-edge) fault-tolerant $t$-spanner in
polylogarithmic number of communication rounds.

\item In this paper, we use Euclidean distances as weights for
the edges of the input graph $G$. However, if the metric $c\cdot
|uv|^{\gamma}$, for positive constant $c$ and $\gamma \ge 1$, is
used in place of Euclidean distances $|uv|$, we can show that our
algorithm still produces a spanner with all three desired
properties. Relative Euclidean distances, such as the function
mentioned above, may be used to produce {\em energy spanners}.

\item Let $G = (V, E)$ be an edge-weighted graph.
The {\em power cost\/} of a vertex $u \in V$ 
is $power(u) = \max\{ w(u, v) \mid v\mbox{ is a neighbor of }u\}$.
In other words, the power cost of a vertex $u$ is proportional to
the cost of $u$ transmitting to a farthest neighbor. The {\em power
cost\/} of $G$ is $\sum_{u \in V} power(u)$ \cite{HKMN}. We can show
that the output of our algorithm is not only lightweight with
respect to the usual weight measure (sum of the weights of all
edges) but also with respect to the power cost measure.

\end{enumerate}

\section{Sequential Relaxed Greedy Algorithm}
\label{sec:relaxedGreedyAlgorithm}

Now we show that a {\em relaxed version} of {\tt SEQ-GREEDY}
produces an output $G'$ with all three desired properties, even when
the input is not a complete Euclidean graph, but is a
$d$-dimensional, $\alpha$-UBG for fixed $d$ and $\alpha$. Relaxing
the requirement in {\tt SEQ-GREEDY} that the edges be totally
ordered by length and allowing for the output to be updated lazily
are critical to obtaining a distributed algorithm that runs in
polylogarithmic number of rounds.

Let $ r > 1$ be a constant to be fixed later and let $W_i = r^i
\alpha/n$ for each $i = 0, 1, 2, \ldots$. Let $I_0 = (0, \alpha/n]$
and for each $i = 1, 2, \ldots$ let $I_i = (W_{i-1}, W_i]$. Let $m =
\lceil \log_r \frac{n}{\alpha} \rceil$. Then, since no edge has
length greater than 1, the length of any edge in $E$ lies in one of
the intervals $I_0, I_1, \ldots, I_m$. Let $E_i = \{\{u, v\} \in E:
|uv| \in I_i\}$.


We now eliminate the restriction that edges within a set $E_i$ be
processed in increasing order by length. We run {\tt SEQ-GREEDY} in
$m+1$ phases: in phase $i$, the algorithm processes edges in $E_i$
in arbitrary order and adds a subset of edges in $E_i$ to the
spanner. For $0 \le i \le m$, we use $G_i$ to denote the spanning
subgraph of $G$ consisting of edges $E_0 \cup E_1 \cup \cdots \cup
E_i$. Thus $G_i$ is the portion of the input graph that the
algorithm has processed in phase $i$ and earlier. We use $G_i'$ to
denote the output of the algorithm at the end of phase $i$. In other
words, $G_i'$ is the spanning subgraph of $G$ consisting of edges of
$G$ that the algorithm has decided to retain in phases $0, 1,
\ldots, i$. The final output of the algorithm is $G' = G_{m}$.

The way $E_0$ is processed is different from the way
$E_i$, $i > 0$ is processed. We now separately describe these two parts.

\subsection{Processing Edges in $E_0$}
\label{subsection:phase0}
We start by stating a property of $G_0$
that follows easily from the fact that all edges in
$G_0$ are small.
\begin{lemma} Every connected component of $G_0$ induces a clique in $G$.
\label{lem:clique}
\end{lemma}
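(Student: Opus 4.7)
The plan is to combine the length cap on edges of $G_0$ with the triangle inequality, and then invoke the defining property of $\alpha$-UBGs that forces all pairs within distance $\alpha$ to be connected by an edge.

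First I would fix an arbitrary connected component $C$ of $G_0$ and pick any two vertices $u, v \in C$. Since $u$ and $v$ are connected in $G_0$, there is a path $u = x_0, x_1, \ldots, x_k = v$ in $G_0$ with $k \le n-1$. Every edge $\{x_{j-1}, x_j\}$ of this path lies in $E_0$, so by definition of $I_0$ we have $|x_{j-1} x_j| \le \alpha/n$.

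Next I would bound the Euclidean distance $|uv|$ by the total Euclidean length of this path, using the triangle inequality in $\mathbb{R}^d$:
\[
|uv| \le \sum_{j=1}^{k} |x_{j-1} x_j| \le k \cdot \frac{\alpha}{n} \le (n-1)\cdot \frac{\alpha}{n} < \alpha.
\]
By the defining constraint of a $d$-dimensional $\alpha$-UBG, any two vertices at Euclidean distance at most $\alpha$ are joined by an edge in $E$, so $\{u,v\} \in E$. Since $u,v$ were arbitrary vertices of $C$, every pair of vertices in $C$ is adjacent in $G$, which is exactly the statement that $C$ induces a clique in $G$.

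There is no real obstacle here; the only point to be careful about is that the bound $k \le n-1$ (so that the accumulated length stays strictly below $\alpha$) uses the fact that a shortest path between two vertices in a graph on $n$ vertices visits each vertex at most once. The argument is otherwise a direct application of the $\alpha$-UBG definition to a telescoping triangle-inequality bound along the path.
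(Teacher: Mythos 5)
Your proof is correct and is exactly the argument the paper has in mind (the paper omits a formal proof, saying only that the lemma ``follows easily from the fact that all edges in $G_0$ are small''): you bound the path length by $(n-1)\cdot\alpha/n < \alpha$ via the triangle inequality and then invoke the $\alpha$-UBG property that any pair at distance at most $\alpha$ must be adjacent.
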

The algorithm {\tt PROCESS-SHORT-EDGES} for processing edges in
$E_0$ consists of three steps (i) determine the connected components
of $G_0$, (ii) use {\tt SEQ-GREEDY} to compute a $t$-spanner for
each connected component (that is, a clique), and (iii) let $G'_0$
be the union of the $t$-spanners computed in Step (2) and output
$G'_0$. The following theorem states the correctness of the {\tt
PROCESS-SHORT-EDGES} algorithm. Its proof follows easily from the
correctness of {\tt SEQ-GREEDY}.

\begin{theorem}
\label{theorem:phase0} $G'_0$ satisfies the following properties. (i)
For every edge $\{u, v\} \in E_0$, $G'_0$ contains a $uv$-path of
length at most $t \cdot |uv|$, (ii) $\Delta(G'_0) = O(1)$, and
(iii) $w(G'_0) = O(w(MST(G)))$.
\end{theorem}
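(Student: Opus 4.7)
The plan is to reduce all three properties to the known guarantees of \texttt{SEQ-GREEDY} on a complete Euclidean graph (the Das--Narasimhan result cited in the paper). By Lemma~\ref{lem:clique}, each connected component $C_j$ of $G_0$ induces a clique in $G$, so the complete Euclidean graph on the point set of $C_j$ is a subgraph of $G$. Hence, running \texttt{SEQ-GREEDY} on each $C_j$ yields a subgraph $H_j$ that (a) is a $t$-spanner of the clique on $C_j$, (b) has $\Delta(H_j) = O(1)$, and (c) satisfies $w(H_j) = O(w(\MST(C_j)))$. The output is $G'_0 = \bigcup_j H_j$, and since the $C_j$ partition $V$, the $H_j$ are edge-disjoint and vertex-disjoint.

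Properties (i) and (ii) are then essentially immediate. For (i), any $\{u,v\}\in E_0$ has both endpoints in a common component $C_j$, and since $H_j$ is a $t$-spanner of the clique on $C_j$ and $\{u,v\}$ is an edge of that clique, $H_j \subseteq G'_0$ contains a $uv$-path of length at most $t\cdot |uv|$. For (ii), every vertex $v$ lies in a unique component $C_j$, and its degree in $G'_0$ equals its degree in $H_j$, which is $O(1)$.

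The main obstacle is property (iii), which I would obtain from the inequality $\sum_j w(\MST(C_j)) \le w(\MST(G))$. The key observation is that every edge of $E_0$ has length at most $\alpha/n$ while every edge of $E\setminus E_0$ is strictly longer, so a Kruskal-style argument applies. Fix any minimum spanning tree $T$ of $G$ and set $F = T \cap E_0$. Then $F$ is a maximal acyclic subgraph of $G_0$: if some edge $e\in E_0\setminus F$ joined two distinct trees of $F$, then adding $e$ to $T$ would create a cycle in which $e$ is strictly shorter than some other edge of $T$ (necessarily one from $E\setminus E_0$), contradicting the MST cycle property. Maximality of $F$ in $G_0$ means that $F$ restricted to each component $C_j$ is a spanning tree of $C_j$, so $w(\MST(C_j)) \le w(F\cap C_j)$, and summing over $j$ gives $\sum_j w(\MST(C_j)) \le w(F) \le w(\MST(G))$.

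Combining this with $w(G'_0) = \sum_j w(H_j) = O\!\left(\sum_j w(\MST(C_j))\right)$ yields $w(G'_0) = O(w(\MST(G)))$, completing the proof. The only non-routine piece is the MST-comparison inequality; everything else is a direct invocation of the complete-Euclidean-graph version of \texttt{SEQ-GREEDY} together with the disjointness of the components of $G_0$.
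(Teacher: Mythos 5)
Your proof is correct and follows the same route as the paper, which merely asserts that the theorem ``follows easily from the correctness of {\tt SEQ-GREEDY}'' applied to each clique of $G_0$, without elaborating. The one nontrivial step you supply --- the Kruskal-style exchange argument showing that $T \cap E_0$ is a spanning forest whose trees coincide with the components of $G_0$, hence $\sum_j w(MST(C_j)) \le w(MST(G))$ --- is exactly the detail the paper leaves implicit for property (iii), and it is sound: the strict separation of lengths between $E_0$ and $E \setminus E_0$ guarantees that the swap $T - f + e$ decreases weight, contradicting minimality.
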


\subsection{Processing Long Edges}
\label{subsection:longerEdges}

We now describe how edges in $E_i$ are processed, for $i > 0$. The
algorithm {\tt PROCESS-LONG-EDGES} has five steps:
(i) computing a cluster cover for $G'_{i-1}$,
(ii) selecting query edges in $E_i$, (iii) computing a cluster graph
$H_{i-1}$ for $G'_{i-1}$, (iv) answering shortest path queries for
the query edges selected in Step (ii), and (v) removing redundant
edges.
These steps are described in the next five subsections.


For any graph $J$, let $V(J)$ denote the vertex set for $J$. For any
pair of vertices $u, v \in V(J)$ let $\ssp_J(u, v)$ denote the
length of a shortest $uv$-path in $J$. Define a {\em cluster\/} of $J$
with {\em center\/} $u \in V(J)$ and {\em radius\/}
$r$ to be a set of vertices $C_u \subseteq V(J)$ such that,
for each $v \in C_u$, $\ssp_J(u, v) \le r$.
A set of clusters $\{C_{u_1}, C_{u_2}, \ldots\}$ of $J$ is a
{\em cluster cover of $J$ of radius $r$\/} if every cluster in the
set has radius $r$, every vertex in $V(J)$ belongs to at least
one cluster, and for any pair of cluster centers $u_i$ and $u_j$,
$\ssp_J(u_i, u_j) > r$.

\subsubsection{Computing a Cluster Cover for $G'_{i-1}$}
\label{sec:cluster.cover} At the beginning of phase $i$ we compute a
cluster cover of radius $\delta W_{i-1}$, where $\delta < 1$ is a
constant that will be fixed later. We start with an arbitrary vertex
$u \in V$ and run Dijkstra's shortest path algorithm with source $u$
on $G'_{i-1}$, in order to identify nodes $v \in V$ with the
property that $\ssp_{G'_{i-1}}(u, v) \le \delta W_{i-1}$; each such
node $v$ gets included in the cluster $C_u$. Once $C_u$ has been
identified, recurse on $V \setminus C_u$ until all nodes belong to
some cluster and we have a cluster cover of $G'_{i-1}$ of radius
$\delta W_{i-1}$.

\subsubsection{Selecting Query Edges in $E_i$}
\label{sec:query.edges} As defined earlier, edges in $E_i$ have
weights in the interval $I_i = (W_{i-1}, W_i]$, while the cluster
cover for $G'_{i-1}$ has radius $\delta W_{i-1}$, with $\delta < 1$.
This implies that each edge in $E_i$ has endpoints in different
clusters. Our goal is to select a unique query edge per pair of
clusters. This will guarantee that there are a constant number of
query edges incident on any node (see Lemma \ref{lem:query.edges})
and this fact will be critically used by the distributed version of
our algorithm to guarantee the degree bound on the spanner that is
constructed.

Let $\theta$ be a quantity that satisfies $0 < \theta <
\frac{\pi}{4}$ and $t \ge 1/(\cos\theta - \sin\theta)$. 
For any value $t > 1$, no matter how small, there always exists a
$\theta$ that satisfies these restrictions. Define an edge $e = \{u,
v\} \in E_i$ to be a {\em covered edge\/} if there is a $z \in V$
such that (i) $\{u, z\} \in G'_{i-1}$, $|vz| \le \alpha$ and
$\angle{vuz} \le \theta$ or (ii) $\{v, z\} \in G'_{i-1}$, $|uz| \le
\alpha$ and $\angle{uvz} \le \theta$. Any edge in $E_i$ that is not
covered is a {\em candidate\/} query edge. The motivation for these
definitions is the following geometric lemma, due to Czumaj and Zhao
\cite{CzumajZhao}.

\begin{lemma}[Czumaj and Zhao~\cite{CzumajZhao}]
Let $0 < \theta < \frac{\pi}{4}$ and $t \ge \frac{1}{\cos\theta -
\sin\theta}$. Let $u, v, z$ be three points in $\R^d$ with
$\angle{vuz} \le \theta$. Suppose further that $|uz| \le |uv|$. Then
the edge $\{u, z\}$ followed by a $t$-spanner path from $z$ to $v$
is a $t$-spanner path from $u$ to $v$ (see Figure~\ref{fig:cz}).
\label{lem:edge.path}
\end{lemma}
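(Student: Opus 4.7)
My plan is to show directly that $|uz| + t \cdot |zv| \le t \cdot |uv|$, since any $t$-spanner path from $z$ to $v$ has length at most $t \cdot |zv|$, so prepending the edge $\{u,z\}$ gives a path of length at most $|uz| + t \cdot |zv|$; if this is bounded by $t \cdot |uv|$, we are done.

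To obtain that bound, I would work in the triangle $uvz$. Write $a = |uz|$, $b = |uv|$, $c = |zv|$; the hypotheses give $a \le b$ and $\angle vuz \le \theta < \pi/4$. By the law of cosines and the monotonicity of $\cos$ on $[0,\pi/2]$,
\[
c^2 \;=\; a^2 + b^2 - 2ab\cos(\angle vuz) \;\le\; a^2 + b^2 - 2ab\cos\theta.
\]
The target inequality $a + tc \le tb$ is equivalent (using $t \ge 1/(\cos\theta - \sin\theta)$, and noting $\cos\theta - \sin\theta > 0$ since $\theta < \pi/4$) to showing
\[
c \;\le\; b - a(\cos\theta - \sin\theta),
\]
because once this holds, multiplying by $t$ yields $tc \le tb - ta(\cos\theta - \sin\theta) \le tb - a$. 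Note the right side is nonnegative since $a \le b$ and $\cos\theta - \sin\theta < 1$, so both sides can be squared without flipping the inequality.

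The main computational step is then to verify that
\[
(b - a(\cos\theta - \sin\theta))^2 \;\ge\; a^2 + b^2 - 2ab\cos\theta,
\]
which, after expanding $(\cos\theta - \sin\theta)^2 = 1 - \sin 2\theta$ and simplifying, reduces to $2a\sin\theta(b - a\cos\theta) \ge 0$. Since $\theta \in (0,\pi/4)$ makes $\sin\theta > 0$, and $a\cos\theta \le a \le b$, the inequality holds. Combining with the law-of-cosines bound gives $c \le b - a(\cos\theta - \sin\theta)$, and rescaling by $t$ yields the required $|uz| + t|zv| \le t|uv|$.

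I do not anticipate any real obstacle: the result is essentially a single planar trigonometric identity, and the only subtlety is ensuring the right-hand side is nonnegative before squaring, which follows from the constraint $\theta < \pi/4$. Everything else is routine algebraic manipulation of the law of cosines.
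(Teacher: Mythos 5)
Your proof is correct. The paper does not prove this lemma at all---it is quoted verbatim from Czumaj and Zhao \cite{CzumajZhao} and used as a black box---so there is no in-paper argument to compare against. Your derivation is a clean, self-contained trigonometric proof: reducing to $|uz| + t|zv| \le t|uv|$ is the right interpretation (in this geometric setting the $t$-spanner path from $z$ to $v$ has length at most $t|zv|$, and both $\{z,v\}$ and $\{u,v\}$ are genuine edges with shortest-path length equal to Euclidean length), the intermediate target $|zv| \le |uv| - |uz|(\cos\theta - \sin\theta)$ is exactly strong enough to absorb the $+|uz|$ term after multiplying by $t \ge 1/(\cos\theta - \sin\theta)$, the nonnegativity check before squaring is the one genuinely necessary precaution and you handle it correctly using $\theta < \pi/4$ and $|uz| \le |uv|$, and the algebraic residue $2|uz|\sin\theta\,(|uv| - |uz|\cos\theta) \ge 0$ is verified correctly. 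One small point worth making explicit in a final write-up: the law of cosines gives $|zv|^2$ exactly, and the step $\cos(\angle vuz) \ge \cos\theta$ uses that $\angle vuz \le \theta < \pi/2$, so $\cos$ is applied on a range where it is decreasing.
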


\begin{figure}[htpb]
\centerline{\includegraphics[width =
0.40\linewidth]{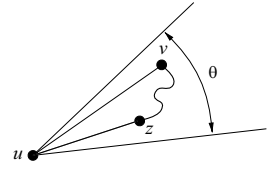}} \caption{(a) Edge $\{u, v\}$ is
{\em covered}: $\{u, z\}$ followed by a $t$-spanner $zv$-path is a
$t$-spanner $uv$-path.} \label{fig:cz}
\end{figure}

Now note that for each covered edge $\{u, v\} \in E_i$, there exists
$z$ that satisfies the preconditions of Lemma~\ref{lem:edge.path}
(by definition), and using this lemma we can show that $G'_{i-1}$
already contains a $uv$-path of length at most $t \cdot |uv|$. This
suggests that covered edges need not be queried and therefore we can
start with the complement of the set of covered edges as candidate
query edges.

For each pair of clusters $C_a$ and $C_b$, let $E_i[C_a, C_b]$
denote the subset of candidate query edges in $E_i$ with one
endpoint in $C_a$ and the other endpoint in $C_b$. Our algorithm
selects a unique query edge $\{x, y\}$ from each nonempty subset
$E_i[C_a, C_b]$. Assuming that $x \in C_a$ and $y \in C_b$, the edge
$\{x, y\}$ is selected so as to minimize
\begin{equation}
t \cdot |xy| - \ssp_{G'_{i-1}}(a,x) - \ssp_{G'_{i-1}}(b, y)
\label{eq:query.pairs}
\end{equation}
The quantity in~(\ref{eq:query.pairs}) is carefully chosen to
guarantee that, if a $t$-spanner path between the endpoints of an
edge $\{x, y\}$ that minimizes~(\ref{eq:query.pairs}) exists in
$G'_i$, then $t$-spanner paths between the endpoints of {\em all}
edges in $E_i[C_a, C_b]$ exist in $G'_i$ (this property will later
be shown in the proof of Theorem~\ref{thm:relaxed.spanner}). This
implies that, for each pair of clusters $C_a$ and $C_b$,  querying
the edge $\{x, y\}$ in $E_i[C_a, C_b]$ that
minimizes~(\ref{eq:query.pairs}) renders querying any other edge in
$E_i[C_a, C_b]$ redundant.
%

The following lemma shows that selecting query edges as described
above filters all but a constant number of edges per cluster. The
proof follows from two observations: (i) if a pair of cluster
centers are connected by an edge in $E_i$, then the clusters are not
too far from each other in Euclidean space (in particular, no
farther than $(4\delta+r)W_{i-1}$), and (ii) the Euclidean distance
between any pair of cluster centers is bounded from below by
$\delta W_{i-1}/t$, because they would otherwise be part of the
same cluster.
\begin{lemma}
The number of query edges in $E_i$ that are incident on any cluster
is $O(t^d(\frac{4\delta + r}{\delta})^d)$, a constant.
\label{lem:query.edges}
\end{lemma}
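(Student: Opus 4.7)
Since the algorithm selects at most one query edge per pair of clusters, it suffices to fix a cluster $C_a$ with center $a$ and bound the number of \emph{neighbor} clusters, i.e., clusters $C_b$ with $E_i[C_a,C_b]\ne\emptyset$. My plan is to carry this out by a standard packing argument in $\R^d$ applied to the centers of the neighbor clusters, squeezed between a geometric upper bound on their Euclidean distance to $a$ (observation~(i) in the proof sketch) and a lower bound on their pairwise separation (observation~(ii)).

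For the upper bound, fix any neighbor cluster $C_b$ with center $b$ and a witness edge $\{x,y\}\in E_i$ with $x\in C_a$, $y\in C_b$. By definition of the cluster cover, $\ssp_{G'_{i-1}}(a,x)\le\delta W_{i-1}$ and $\ssp_{G'_{i-1}}(b,y)\le\delta W_{i-1}$; since each edge weight in $G'_{i-1}$ equals its Euclidean length, these shortest-path lengths dominate $|ax|$ and $|by|$. Combined with $|xy|\le W_i=rW_{i-1}$, the triangle inequality yields $|ab|\le(2\delta+r)W_{i-1}$, well within the slacker bound $(4\delta+r)W_{i-1}$ of observation~(i). So every neighbor-cluster center lies in the ball of radius $(4\delta+r)W_{i-1}$ around $a$.

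For the lower bound I need to show that any two distinct cluster centers $a,a'$ satisfy $|aa'|>\delta W_{i-1}/t$. This is the step I expect to be the main obstacle, as it leans on an inductive invariant (established elsewhere in the paper) to the effect that $G'_{i-1}$ is a $t$-spanner of the subgraph $G_{i-1}$ of $G$ processed in earlier phases. Granting this invariant, suppose for contradiction that $|aa'|\le\delta W_{i-1}/t$. The $\alpha$-UBG property together with the invariant would then produce a path in $G'_{i-1}$ of length at most $t\cdot|aa'|\le\delta W_{i-1}$ from $a$ to $a'$, contradicting the rule of the greedy cluster-cover construction that a new center $a'$ is created only when $\ssp_{G'_{i-1}}(a,a')>\delta W_{i-1}$ for every previously chosen center $a$. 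The delicate point to be checked is that $\delta W_{i-1}/t$ is in the regime where the $\alpha$-UBG property supplies the edge $\{a,a'\}$ directly (or at least a suitable short chain of edges in $G_{i-1}$) needed to invoke the spanner invariant; this may require a separate case analysis depending on how $\delta W_{i-1}/t$ compares to $\alpha$, and is where I expect most of the technical work to live.

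With both bounds established, the neighbor-cluster centers form a $(\delta W_{i-1}/t)$-separated subset of the ball of radius $(4\delta+r)W_{i-1}$ around $a$. Placing disjoint balls of radius $\delta W_{i-1}/(2t)$ around each center, all contained in a mildly enlarged ball, the standard volume packing estimate in $\R^d$ gives
\[
O\!\left(\left(\frac{(4\delta+r)W_{i-1}}{\delta W_{i-1}/t}\right)^{d}\right) \;=\; O\!\left(t^{d}\left(\frac{4\delta+r}{\delta}\right)^{d}\right)
\]
as the bound on the number of neighbor clusters; this is a constant since $t$, $\delta$, $r$, and $d$ are all fixed, matching the lemma's claim.
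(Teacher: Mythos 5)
Your approach mirrors the paper's own proof sketch: a packing argument in $\R^d$ squeezed between the Euclidean upper bound on center-to-center distance (observation (i)) and a Euclidean lower bound on pairwise center separation (observation (ii)). Your derivation of $|ab|\le(2\delta+r)W_{i-1}$ via the triangle inequality is correct and even tighter than the paper's stated $(4\delta+r)W_{i-1}$, and the reduction to counting neighbor clusters via the one-query-edge-per-cluster-pair rule is also right.

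The gap you flag in the lower-bound step is genuine, not a technicality that resolves itself. Your argument produces a contradiction only when the pair $\{a,a'\}$ is actually an edge of $G$ that was processed in an earlier phase; the $\alpha$-UBG property supplies this only when $|aa'|\le\alpha$. But once $W_{i-1}>t\alpha/\delta$ (which does occur in later phases, since $W_{i-1}$ grows towards $1$ while $\alpha$ stays fixed), you can have $\alpha<|aa'|\le\delta W_{i-1}/t$ with $\{a,a'\}\notin E$, and there is no reason for $G_{i-1}$ to contain a short $a$--$a'$ chain: the only connectivity you are guaranteed between two such neighbor-cluster centers passes through edges of $E_i$, which are not in $G_{i-1}$, so the inductive spanner invariant never gets a foothold. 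What does hold unconditionally is the weaker separation $|aa'|>\min\{\alpha,\,\delta W_{i-1}/t\}$ (the first branch trivially when $|aa'|>\alpha$, the second by your argument otherwise). Since $W_{i-1}<r$ throughout, this still yields a constant bound by packing, just with an added $1/\alpha$ dependence, on the order of $O\bigl(\bigl((4\delta+r)\max\{t/\delta,\,1/\alpha\}\bigr)^d\bigr)$. That is the repair you should carry out rather than deferring the case. (The paper's own statement of observation (ii) --- ``they would otherwise be part of the same cluster'' --- does not address this regime either, so your account faithfully reproduces the sketch, but the sketch is incomplete on exactly this point.)
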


\subsubsection{Computing a Cluster Graph}
\label{sec:cluster.graph}

For each selected query edge $\{x, y\} \in E_i$, we need to know if
$G'_{i-1}$ contains an $xy$-path of length at most $t \cdot |xy|$.
In general, the number of hops in a shortest $xy$-path in $G'_{i-1}$
can be quite large and having to traverse such a path would mean
that the shortest path query corresponding to edge $\{x, y\}$ could
not be answered quickly enough. To get around this problem, we use
an idea from \cite{DasNarasimhan97} in which the authors construct
an approximation to $G'_{i-1}$, called a {\em cluster graph\/}, and
show that for any edge $\{x, y\} \in E_i$, the shortest path query
for $\{x, y\}$ can be answered approximately on $H_{i-1}$ in a
constant number of steps. The goal of Das and Narasimhan
\cite{DasNarasimhan97} was to improve the running time of {\tt
SEQ-GREEDY} on complete Euclidean graphs, but we show that the
Das-Narasimhan data structure can be constructed and maintained in a
distributed fashion for efficiently answering shortest path queries
for edges belonging to a $\alpha$-UBG. In the following, we describe
a sequential algorithm that starts with a cluster cover of
$G'_{i-1}$ of radius $\delta W_{i-1}$, and builds a {\em cluster
graph\/} $H_{i-1}$ of $G'_{i-1}$. This algorithm is identical to the
one in Das and Narasimhan \cite{DasNarasimhan97} and is included
mainly for completeness.

The vertex set of $H_{i-1}$ is $V$ and the edge set of $H_{i-1}$
contains two types of edges: {\em intra-cluster} edges and {\em
inter-cluster} edges. An edge $\{a, x\}$ is an intra-cluster edge if
$a$ is a cluster center and $x$ is node in $C_a$. Inter-cluster
edges are between cluster centers. An edge $\{a, b\}$ is an
inter-cluster edge if $a$ and $b$ are cluster centers, and at least
one of the following two conditions holds: (i) $\ssp_{G'_{i-1}}(a,
b) \le W_{i-1}$, or (ii) there is an edge in $G'_{i-1}$ with one
endpoint in $C_a$ and the other endpoint in $C_b$. See
Figure~\ref{fig:clustergraph}.
%
%
\begin{figure}[htbp]
\centering
\includegraphics[width=0.75\linewidth]{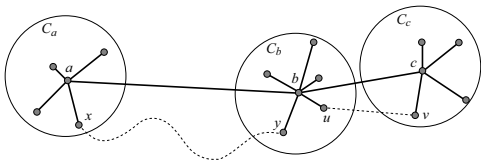}
\caption{Edges interior to disks are {\em intra-cluster} edges. Edge
$\{a, b\}$ is an {\em inter-cluster} edge because
$\ssp_{G'_{i-1}}(a, b) \le W_{i-1}$, and $\{b, c\}$ is an {\em
inter-cluster} edge because $\{u, v\}$ is in $G'_{i-1}$. An
$xy$-path in $G'_{i-1}$, shown by the dashed curve may be
approximated by the path $x, a, b, y$ in $H_{i-1}$.}
\label{fig:clustergraph}
\end{figure}

Regardless of the type of a cluster edge $e = \{a, b\}$ (inter- or
intra-), the weight of $e$ is the value of $sp_{G'_{i-1}}(a, b)$.
The following lemma
follows easily from the definition of inter-cluster edges.

\begin{lemma}
For any inter-cluster edge $\{a, b\}$ in $H_{i-1}$, we have that
$\ssp_{G'_{i-1}}(a, b) \le (2\delta+1)W_{i-1}$.
\label{lem:intercluster.distance}
\end{lemma}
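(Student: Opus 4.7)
The plan is to prove the bound by splitting on the two defining conditions for an inter-cluster edge $\{a, b\}$. The first condition gives the bound for free: if $\ssp_{G'_{i-1}}(a, b) \le W_{i-1}$, then since $\delta > 0$, we immediately have $\ssp_{G'_{i-1}}(a, b) \le W_{i-1} \le (2\delta+1)W_{i-1}$, so nothing more needs to be said.

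The substantive case is the second: suppose there is an edge $\{u, v\} \in G'_{i-1}$ with $u \in C_a$ and $v \in C_b$. I would build a short path from $a$ to $b$ in $G'_{i-1}$ by concatenating three pieces: a shortest $au$-path inside $G'_{i-1}$, the edge $\{u,v\}$, and a shortest $vb$-path inside $G'_{i-1}$. By the definition of a cluster of radius $\delta W_{i-1}$ (cf.\ Section~\ref{sec:cluster.cover}), the first and third pieces each have length at most $\delta W_{i-1}$. The middle edge has length at most $W_{i-1}$: this is the key fact and it comes from the construction of $G'_{i-1}$, which only contains edges from $E_0 \cup E_1 \cup \cdots \cup E_{i-1}$, all of whose lengths lie in $(0, W_{i-1}]$ by the definition of the intervals $I_0, I_1, \ldots, I_{i-1}$. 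Adding these three contributions yields $\ssp_{G'_{i-1}}(a,b) \le \delta W_{i-1} + W_{i-1} + \delta W_{i-1} = (2\delta+1) W_{i-1}$, as required.

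There is really no obstacle here; the argument is just a triangle-inequality-style concatenation in the subgraph $G'_{i-1}$. The only point worth double-checking is the length bound on the edge $\{u,v\}$, which relies on the (straightforward) observation that all edges retained during the first $i$ phases have length bounded by the top of the $i-1$-st bin, namely $W_{i-1}$.
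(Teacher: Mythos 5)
Your proof is correct and matches the argument the paper implicitly has in mind (the paper only remarks that the lemma ``follows easily from the definition of inter-cluster edges'' without spelling it out). Both defining conditions for an inter-cluster edge are handled: condition (i) gives the bound trivially, and condition (ii) gives it via the concatenation $a \leadsto u$, edge $\{u,v\}$, $v \leadsto b$, using the cluster radius $\delta W_{i-1}$ for the two end segments and the observation that every edge of $G'_{i-1}$ comes from $E_0\cup\cdots\cup E_{i-1}$ and hence has length at most $W_{i-1}$.
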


The above upper bound also implies that $|ab| \le (2\delta +
1)W_{i-1}$. Using this and arguments similar to those used for Lemma
\ref{lem:query.edges}, we can show that the number of
inter-cluster edges incident to a cluster center is
$O((5+1/\delta)^d)$, so we have the following lemma.
\begin{lemma}
The number of inter-cluster edges in $H_{i-1}$ incident to a cluster
center is $O((5+1/\delta)^d)$, a constant. \label{lem:H.degree}
\end{lemma}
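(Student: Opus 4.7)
The plan is to prove the degree bound by a packing argument in $\R^d$: I will show that every cluster center $b$ joined to $a$ by an inter-cluster edge lies in a Euclidean ball of radius $O(W_{i-1})$ around $a$, and that any two distinct cluster centers are Euclidean-far apart by $\Omega(W_{i-1})$, so only $O(1)$ such $b$'s can coexist.

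First I would establish the upper bound. If $\{a,b\}$ is an inter-cluster edge of $H_{i-1}$, then Lemma~\ref{lem:intercluster.distance} gives $\ssp_{G'_{i-1}}(a,b) \le (2\delta+1)W_{i-1}$. Since the length of any path in $G'_{i-1}$ is at least the Euclidean distance between its endpoints, this yields $|ab| \le (2\delta+1)W_{i-1}$. Hence every cluster center adjacent to $a$ via an inter-cluster edge lies in a ball $B$ of radius $(2\delta+1)W_{i-1}$ around $a$.

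Next I would reuse the lower bound on Euclidean distance between cluster centers that was already invoked in the proof of Lemma~\ref{lem:query.edges}: any two distinct cluster centers $a,b$ satisfy $|ab| > \delta W_{i-1}/t$. The justification is that if $|ab|$ were smaller, the edge $\{a,b\}$ would have been present in $G_{i-1}$ and the $t$-spanner property carried inductively by $G'_{i-1}$ would force $\ssp_{G'_{i-1}}(a,b) \le t\,|ab| \le \delta W_{i-1}$, contradicting the defining property of a cluster cover of radius $\delta W_{i-1}$.

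Finally I would finish with a standard Euclidean packing estimate. Place a ball of radius $\delta W_{i-1}/(2t)$ around each inter-cluster neighbor of $a$; by the lower bound these balls are pairwise disjoint, and they all lie in the ball of radius $(2\delta+1)W_{i-1}+\delta W_{i-1}/(2t)$ concentric with $B$. Comparing $d$-dimensional volumes bounds the count by
\[
\left(\frac{2(2\delta+1)+ \delta/t}{\delta/t}\right)^{d} \;=\; O\!\left(\left(5+\tfrac{1}{\delta}\right)^{d}\right),
\]
absorbing the factor $t$, which is a constant close to $1$. The main obstacle is really bookkeeping rather than geometric: verifying the lower bound $|ab| > \delta W_{i-1}/t$ requires invoking the inductive hypothesis that $G'_{i-1}$ is a $t$-spanner of $G_{i-1}$ together with the guarantee (from the $\alpha$-UBG definition and the fact that $|ab| \le W_{i-1}$) that such an edge $\{a,b\}$ indeed lies in $G_{i-1}$. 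Once that inductive statement is in hand, the upper bound and the packing step are routine.
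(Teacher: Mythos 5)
Your proposal is correct and matches the paper's own (very terse) argument: Lemma~\ref{lem:intercluster.distance} for the Euclidean upper bound $|ab|\le(2\delta+1)W_{i-1}$, the cluster-cover separation for a Euclidean lower bound $\Omega(\delta W_{i-1}/t)$, and a $d$-dimensional volume-packing comparison, which is exactly what the paper means by ``arguments similar to those used for Lemma~\ref{lem:query.edges}.'' One small imprecision in your bookkeeping (shared by the paper): to deduce the separation lower bound you need $\{a,b\}$ to actually be an edge of $G$, and the $\alpha$-UBG model guarantees this only when $|ab|\le\alpha$, not merely $|ab|\le W_{i-1}$ as you wrote; when $\delta W_{i-1}/t>\alpha$ the honest lower bound is $\min\{\alpha,\delta W_{i-1}/t\}$, which still gives an $O(1)$ packing constant (since $W_{i-1}\le r$ for every phase) but one that depends on $\alpha$ as well as $\delta$.
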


The main reason for constructing the cluster graph $H_{i-1}$ is that
lengths of paths in $H_{i-1}$ are close to lengths of corresponding
paths in $G'_{i-1}$ and shortest path queries for edges in $E_i$ can
be answered quickly in $H_{i-1}$. The following lemma (whose proof
appears in Das and Narasimhan \cite{DasNarasimhan97}) shows that we
can construct $H_{i-1}$ such that path lengths in $H_{i-1}$
approximate path lengths in $G'_{i-1}$ to any desired extent,
depending on the choice of $\delta$.

\begin{lemma}
For any edge $\{x, y\}\in E_i$, if there is a path between $x$ and
$y$ in $G'_{i-1}$ of length $L_1$, then there is a path between $x$
and $y$ in $H_{i-1}$ of length $L_2$ such that
$L_1 \le L_2 \le \frac{1 + 6\delta}{1-2\delta} L_1$.
\label{lem:H.paths}
\end{lemma}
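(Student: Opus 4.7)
The lower bound $L_1 \le L_2$ is immediate: every edge $\{a,b\}$ of $H_{i-1}$ carries weight $\ssp_{G'_{i-1}}(a,b)$ and can be expanded into a walk of that length in $G'_{i-1}$, so any $H_{i-1}$-path from $x$ to $y$ corresponds to a $G'_{i-1}$-walk of the same total length, forcing $\ssp_{G'_{i-1}}(x,y) \le L_2$.

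For the upper bound, I plan to exhibit a specific $H_{i-1}$-path built by sampling the given $G'_{i-1}$-path $P = v_0 v_1 \cdots v_m$ (with $x=v_0$ and $y=v_m$) at sparse check points. Greedily pick $x = v_{k_0}, v_{k_1}, \ldots, v_{k_N} = y$ by letting $v_{k_{r+1}}$ be the farthest vertex on $P$ after $v_{k_r}$ for which the sub-path of $P$ from $v_{k_r}$ to $v_{k_{r+1}}$ has length at most $(1-2\delta)W_{i-1}$, or taking a single forced step if the next edge alone already exceeds this threshold. Writing $c_r$ for the cluster center of $v_{k_r}$ and $\ell_r$ for the length of the $r$-th sub-path, the proposed $H_{i-1}$-path is
\[
x \to c_0 \to c_1 \to \cdots \to c_N \to y.
\]
The two outer edges are intra-cluster, each of weight at most $\delta W_{i-1}$. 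For each inner edge, the triangle inequality $\ssp_{G'_{i-1}}(c_r, c_{r+1}) \le \delta W_{i-1} + \ell_r + \delta W_{i-1}$ shows the edge lies in $H_{i-1}$: in the unforced case $\ell_r \le (1-2\delta)W_{i-1}$ gives $\ssp_{G'_{i-1}}(c_r, c_{r+1}) \le W_{i-1}$ so condition~(i) applies, while in the forced case the single edge $\{v_{k_r}, v_{k_{r+1}}\} \in G'_{i-1}$ bridges $C_{c_r}$ and $C_{c_{r+1}}$ so condition~(ii) applies. The same triangle inequality bounds the weight of each inner edge by $2\delta W_{i-1} + \ell_r$, whence
\[
L_2 \le 2\delta W_{i-1}(1+N) + L_1.
\]

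The remaining task is to bound $N$. For every $r \le N-2$ the greedy rule forces $\ell_r + w(f_{r+1}) > (1-2\delta)W_{i-1}$, where $f_{r+1}$ is the first edge of sub-path $r+1$ (this inequality is automatic in the forced case since then $\ell_r$ alone exceeds the threshold). Summing over $r$, and using $\sum \ell_r \le L_1$ together with the fact that the $f_{r+1}$ are distinct edges of $P$ so $\sum w(f_{r+1}) \le L_1$, yields $N < 1 + 2L_1/((1-2\delta)W_{i-1})$. Plugging this in and invoking $L_1 \ge |xy| > W_{i-1}$ (which holds because $\{x,y\} \in E_i$) to absorb the residual $4\delta W_{i-1}$ into $4\delta L_1$, routine algebra delivers $L_2 < L_1(1 + 6\delta - 8\delta^2)/(1-2\delta) \le L_1(1+6\delta)/(1-2\delta)$. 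The main obstacle I anticipate is the bookkeeping for $N$: a direct greedy comparison only yields a lower bound on $N$, and one must exploit the ``straddling'' property that each sub-path plus the first edge of the next one exceeds the threshold, so that summing produces the factor-two loss $2L_1$ instead of an unusable bound.
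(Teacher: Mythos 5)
The paper does not prove this lemma at all --- it states that "the proof appears in Das and Narasimhan \cite{DasNarasimhan97}" and merely cites it. So there is no internal paper proof to compare against; what you have produced is a self-contained reconstruction, which appears to be essentially the standard argument behind the Das--Narasimhan cluster-graph approximation. I checked the details and they are sound: the greedy checkpointing rule produces inter-cluster edges that satisfy either condition (i) of the cluster-graph definition (unforced case, $\ssp_{G'_{i-1}}(c_r,c_{r+1})\le 2\delta W_{i-1}+\ell_r\le W_{i-1}$) or condition (ii) (forced case, the single $G'_{i-1}$ edge bridges the two clusters); the ``straddling'' inequality $\ell_r+w(f_{r+1})>(1-2\delta)W_{i-1}$ for $r\le N-2$ is exactly the right observation, and the factor-$2$ bookkeeping $2L_1>(N-1)(1-2\delta)W_{i-1}$ is what lets you absorb the per-edge $2\delta W_{i-1}$ overhead; finally $W_{i-1}<|xy|\le L_1$ lets you fold the residual $4\delta W_{i-1}$ into $4\delta L_1$, and the algebra $L_1\bigl(1+4\delta+\tfrac{4\delta}{1-2\delta}\bigr)=L_1\tfrac{1+6\delta-8\delta^2}{1-2\delta}\le L_1\tfrac{1+6\delta}{1-2\delta}$ is correct.

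One small remark on the lower bound: your argument shows $\ssp_{G'_{i-1}}(x,y)\le L_2$ for any $H_{i-1}$-path of weight $L_2$, which gives $L_1\le L_2$ only under the reading that $L_1$ denotes the \emph{shortest} $G'_{i-1}$-path length. That is the reading the paper actually uses (Section~3.3.4 invokes $L_1\le L_2$ precisely to conclude ``if $H_{i-1}$ has a short path then so does $G'_{i-1}$''), so this is an imprecision in the lemma statement rather than a gap in your proof, but it is worth saying explicitly that you are taking $L_1=\ssp_{G'_{i-1}}(x,y)$. You should also note that the degenerate case $c_r=c_{r+1}$ (consecutive checkpoints landing in the same cluster) only tightens the bound since the corresponding $H_{i-1}$-hop contributes weight $0\le 2\delta W_{i-1}+\ell_r$. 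With those two clarifications the argument is complete.
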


\subsubsection{Answering Shortest Path Queries}
\label{sec:query.answer} For query edges $\{x, y\} \in E_i$, we are
interested in knowing whether $G'_{i-1}$ has an $xy$-path of length
at most $t \cdot |xy|$. We ask this question on the cluster graph
$H_{i-1}$. If $H_{i-1}$ contains an $xy$-path of length at most $t
\cdot |xy|$, we do not add $\{x, y\}$ to $G'_i$; otherwise we do. If
$H_{i-1}$ contains an $xy$-path of length at most $t \cdot |xy|$,
then so does $G'_{i-1}$ (by Lemma \ref{lem:H.paths}, since $L_1 \le
L_2$). Therefore, not adding $\{x, y\}$ to the spanner is not a
dangerous choice. On the other hand, even if $H_{i-1}$ does not
contain an $xy$-path of length at most $t \cdot |xy|$, $G'_{i-1}$
might contain such a path and in this case adding edge $\{x, y\}$ is
unnecessary. Adding extra edges is of course not problematic for the
$t$-spanner property. It will turn out that this is not a problem
even for the requirement that the spanner should have bounded degree
and small weight, given that paths in $H_{i-1}$ can approximate
paths in $G'_{i-1}$ to an arbitrary degree.

Given the structure of the cluster graph, all but at most 2 edges in
any simple $xy$-path are inter-cluster edges. Since the radius of
each cluster is $\delta W_{i-1}$, each inter-cluster edge has weight
greater than $\delta W_{i-1}$. We are looking for a path of length
at most $t \cdot |xy|$. Since $|xy| \in (W_{i-1}, W_i]$, we are
looking for a path of length at most $t \cdot W_i = t \cdot r \cdot
W_{i-1}$. Any simple path in $H_{i-1}$ of length at most $t \cdot r
\cdot W_{i-1}$ has at most 2 + $\lceil tr/\delta \rceil$ hops, which
is a constant. This yields the following lemma.

\begin{lemma}
\label{lemma:constantH} For any edge $\{x, y\} \in E_i$, if
$\ssp_{H_{i-1}}(x, y) \le t \cdot |xy|$, then $H_{i-1}$ contains a
shortest $xy$-path with $O(1)$ hops (no more than 2 + $\lceil
tr/\delta \rceil$).
\end{lemma}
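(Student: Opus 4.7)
The plan is to exploit the structural restriction that in any simple $xy$-path of $H_{i-1}$, the non-center vertices can only sit at the endpoints, and then bound the number of inter-cluster edges by a weight-counting argument.

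First I would fix a shortest $xy$-path $P$ in $H_{i-1}$ and argue that at most two of its edges are intra-cluster. By the cluster cover construction in Section~\ref{sec:cluster.cover}, every vertex is assigned to exactly one cluster, so a non-center vertex $v$ has only one intra-cluster edge incident to it, namely the one joining $v$ to the center of its cluster. Consequently, in the simple path $P$, if $x$ is not a cluster center, then $P$ must begin with the unique intra-cluster edge from $x$ to its center; symmetrically for $y$. Any interior vertex $v$ of $P$ must be a cluster center, since entering and leaving a non-center vertex would require traversing its unique intra-cluster edge twice, violating simplicity. Thus $P$ contains at most two intra-cluster edges (one at each end) and all remaining edges are inter-cluster edges between cluster centers.

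Next I would bound the number of inter-cluster edges on $P$ by a weight argument. Each intra-cluster edge has weight at most $\delta W_{i-1}$ by the cluster radius, and each inter-cluster edge $\{a,b\}$ has weight $\ssp_{G'_{i-1}}(a,b) > \delta W_{i-1}$ because distinct cluster centers are at shortest-path distance strictly greater than $\delta W_{i-1}$ (by the definition of a cluster cover). On the other hand, the total length of $P$ is at most $t\cdot|xy| \le t\cdot W_i = t\cdot r\cdot W_{i-1}$, since $|xy|\in I_i=(W_{i-1},W_i]$. Letting $k$ denote the number of inter-cluster edges in $P$, we get
\begin{equation*}
k\cdot \delta W_{i-1} \;<\; \sum_{\text{inter-cluster }e\in P} w(e) \;\le\; t\cdot r\cdot W_{i-1},
\end{equation*}
so $k < tr/\delta$, giving $k \le \lceil tr/\delta\rceil$. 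Combined with the at most two intra-cluster edges, the total hop count of $P$ is at most $2 + \lceil tr/\delta\rceil$, which is $O(1)$ since $t$, $r$, and $\delta$ are all fixed constants.

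There is no real obstacle here; the lemma is essentially a direct consequence of the cluster cover separation property (centers are far apart in $G'_{i-1}$, hence inter-cluster edges are heavy) combined with the uniqueness of the cluster assignment (which rules out non-centers as interior vertices of a simple path). The only mild subtlety is justifying that interior vertices must be centers, which relies on the fact that the recursive cluster construction in Section~\ref{sec:cluster.cover} yields a partition rather than an overlapping cover.
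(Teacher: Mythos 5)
Your proof is correct and follows essentially the same route as the paper: decompose a shortest $xy$-path in $H_{i-1}$ into at most two intra-cluster edges plus some inter-cluster edges, then bound the number of inter-cluster edges by dividing the path-length budget $t\cdot r\cdot W_{i-1}$ by the lower bound $\delta W_{i-1}$ on inter-cluster edge weight coming from the cluster-center separation property. You supply a bit more detail than the paper (which just asserts ``all but at most 2 edges\ldots are inter-cluster'') by observing that non-center vertices have degree one in $H_{i-1}$ under the partition construction and hence cannot be interior to a simple path, but this is the same argument, just made explicit.
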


One issue we need to deal with, especially when attempting to
construct and answer queries in $H_{i-1}$ in a distributed setting,
is that edges in $H_{i-1}$ need not be present in the underlying
network $G$. Specifically, for an intra-cluster edge $\{u, a\}$,
where $C_a$ is a cluster and $u \in C_a$, it may be the case that
$|ua| > \alpha$ and $\{u, a\}$ may be absent from $G$. Similarly, an
inter-cluster edge $\{a, b\}$ in $H_{i-1}$ may be absent in $G$.
However, for any edge $\{x, y\}$ in $H_{i-1}$ (intra- or
inter-cluster edge), we have the bound $\ssp_{G'_{i-1}}(x, y) \le
(2\delta + 1) W_{i-1}$. This follows from Lemma
\ref{lem:intercluster.distance} and the fact that the radius of each
cluster is $\delta W_{i-1}$. Thus a shortest $xy$-path in $G'_{i-1}$
lies entirely in a ball of radius $(2\delta + 1) W_{i-1}$ centered
at $x$. Since $G'_{i-1}$ is a spanning subgraph of $G$, this implies
that there is a shortest $xy$-path $P$ in $G$ that lies entirely in
the $d$-dimensional ball of radius $(2\delta + 1) W_{i-1}$ centered
at $x$. Since any two vertices in $P$ that are two hops away from
each other are at least $\alpha$ apart (in the $d$-dimensional
Euclidean space), $P$ contains at most $\lceil 2 (2\delta + 1)
W_{i-1}/\alpha\rceil < \lceil 2(2 \delta + 1)/\alpha \rceil$ hops.
This argument
yields the following theorem.

\begin{theorem}
For any edge $\{x, y\} \in E_i$, if $\ssp_{H_{i-1}}(x, y) \le t
\cdot |xy|$, then $G$ contains a shortest $xy$-path with $O(1)$ hops
(no more than $\lceil 2(2 \delta + 1)/\alpha \rceil$).
\end{theorem}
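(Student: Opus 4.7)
The plan is to combine Lemma~\ref{lemma:constantH} with the per-edge-of-$H_{i-1}$ argument sketched in the paragraph immediately preceding the statement. The hypothesis $\ssp_{H_{i-1}}(x,y)\le t\cdot|xy|$ together with Lemma~\ref{lemma:constantH} yields a shortest $xy$-path $Q=a_0,a_1,\ldots,a_k$ in $H_{i-1}$ with $k=O(1)$ (at most $2+\lceil tr/\delta\rceil$ hops). For each edge $\{a_{j-1},a_j\}$ of $Q$, whether intra-cluster or inter-cluster, Lemma~\ref{lem:intercluster.distance} together with the cluster-radius bound $\delta W_{i-1}$ gives $\ssp_{G'_{i-1}}(a_{j-1},a_j)\le(2\delta+1)W_{i-1}$, and since $G'_{i-1}$ is a spanning subgraph of $G$, the same upper bound holds for $\ssp_G(a_{j-1},a_j)$. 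Consequently, any shortest $a_{j-1}a_j$-path $P_j$ in $G$ lies entirely inside the $d$-ball of radius $(2\delta+1)W_{i-1}$ centered at $a_{j-1}$.

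Next I would bound the hop-count of each $P_j$ via the $\alpha$-UBG defining property: any two vertices at Euclidean distance at most $\alpha$ are adjacent in $G$. I select a minimum-weight $a_{j-1}a_j$-path that is also minimum-hop among all minimum-weight $a_{j-1}a_j$-paths; by the triangle inequality and the $\alpha$-UBG shortcut rule, no two vertices of this path two hops apart can lie within Euclidean distance $\alpha$ (else replacing them by the direct edge would give a minimum-weight path of strictly smaller hop-count). A standard Euclidean packing argument on the alternate vertices of $P_j$---mutually more than $\alpha$ apart and trapped inside a ball of radius $(2\delta+1)W_{i-1}<2\delta+1$, using that $W_{i-1}<1$ because edges in $E_i$ have Euclidean length at most one---then forces $|P_j|\le\lceil 2(2\delta+1)W_{i-1}/\alpha\rceil\le\lceil 2(2\delta+1)/\alpha\rceil$. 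Concatenating the $P_j$'s along $Q$ and pruning repeated vertices yields an $xy$-path in $G$ with at most $k\cdot\lceil 2(2\delta+1)/\alpha\rceil=O(1)$ hops, establishing the theorem.

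The step that I expect to require the most care is the "without loss of generality" reduction to a minimum-weight minimum-hop $a_{j-1}a_j$-path. The triangle inequality alone only yields $|a_{j-1}w|+|wa_j|\ge|a_{j-1}a_j|$ with equality attainable, so a two-hop shortcut provided by the $\alpha$-UBG defining property need not strictly reduce weight---only hop count. One must therefore argue, by iteratively shortcutting among minimum-weight paths, the existence of a minimum-weight path in which no such two-hop shortcut is possible, and only then apply the $\alpha$-UBG defining property. Once this tie-breaking step is in hand, the rest is a routine Euclidean packing calculation and bookkeeping with the constants $\delta$, $r$, $t$, and $\alpha$.
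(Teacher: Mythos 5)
Your proposal takes the theorem at its useful face value and does more work than the paper's own exposition. The paper's proof is precisely the paragraph immediately preceding the statement, and that paragraph establishes only the \emph{per-edge} claim: for a single edge $\{a, b\}$ of $H_{i-1}$, a shortest $ab$-path in $G$ has at most $\lceil 2(2\delta+1)W_{i-1}/\alpha\rceil < \lceil 2(2\delta+1)/\alpha\rceil$ hops. The theorem then asserts the same constant for a query edge $\{x,y\}\in E_i$, which does not follow directly, since that query edge spans up to $2+\lceil tr/\delta\rceil$ hops of $H_{i-1}$. (Read absolutely literally, the theorem is even vacuous: $\{x,y\}\in E_i\subseteq E$ is itself a one-hop shortest $xy$-path in $G$.) You instead compose Lemma~\ref{lemma:constantH} with the per-edge bound, concatenating the $G$-realizations of each $H_{i-1}$-edge along the short $H_{i-1}$-path $Q$, which is the correct reading of what the theorem is meant to deliver---a constant search radius in $G$ sufficient to answer the shortest-path query. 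The cost is that your explicit constant is the product $(2+\lceil tr/\delta\rceil)\cdot\lceil 2(2\delta+1)/\alpha\rceil$ rather than the stated $\lceil 2(2\delta+1)/\alpha\rceil$; this is a genuine discrepancy with the theorem's parenthetical remark, but it is the bound that the paper's own machinery actually supports, and it remains $O(1)$ as required for the application in Section~\ref{sec:algorithm}.

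Your aside on the tie-breaking step is also a valid catch. The paper simply asserts that any two vertices two hops apart on a shortest path are more than $\alpha$ apart; this is not true for an arbitrary weighted shortest path, because the two-hop shortcut provided by the $\alpha$-UBG property only weakly decreases weight. Selecting a minimum-hop member of the set of minimum-weight paths, as you do, is exactly what is needed to rule out the shortcut, and the ensuing Euclidean packing inside the ball of radius $(2\delta+1)W_{i-1}<2\delta+1$ (using $W_{i-1}<1$) is routine. In summary, the proposal is correct and more careful than the paper; the one caveat is the larger (but still constant) explicit hop bound versus the theorem's parenthetical figure.
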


This theorem implies that brute force search initiated from one of
the endpoints, say $x$, will be able to answer the shortest path
query on edge $\{x, y\}$ in $O(1)$ rounds in a distributed setting.

\subsubsection{Removing Redundant Edges} \label{sec:redundant.edges}

Let $t_1$ be such that $1 < t_1 < t$. Recall that shortest path
queries for edges in $E_i$ are answered on $H_{i-1}$, and so updates
to $G'_i$ in phase $i$ do not influence subsequent shortest path
queries in phase $i$. Thus it is possible that in phase $i$ two
edges $\{u, v\}$ and $\{u', v'\}$ get added to $G_i$, yet both of
the following hold:
\begin{itemize}
\item [(i)] $\ssp_{H_{i-1}}(u, u') + |u'v'| + \ssp_{H_{i-1}}(v', v) \le t_1 \cdot
|uv|$
\item [(ii)] $\ssp_{H_{i-1}}(u', u) + |uv| +
               \ssp_{H_{i-1}}(v, v') \le t_1 \cdot |u'v'|$
\end{itemize}
Note that, since $\ssp_{G'_{i-1}}(x, y) \le \ssp_{H_{i-1}}(x, y)$
holds for any pair of nodes $x$ and $y$, and since $t_1 < t$,
conditions (i) and (ii) above imply that $G'_i$ contains $t$-spanner
paths from $u$ to $v$ and from $u'$ to $v'$. We call two edges $\{u,
v\}$ and $\{u', v'\}$ satisfying conditions (i) and (ii) above {\em
mutually redundant}: one of them could potentially be eliminated
from $G_i$, without compromising the $t$-spanner property of $G_i$.
In fact, such mutually redundant pairs of edges need to be
eliminated from $G'_i$ because our proof that $G'$ has small weight
(Theorem \ref{thm:boundedweight}) depends on the absence of such
pairs of edges.

To do this,
we build a graph $J$ that has a node for each edge in a mutually
redundant pair and an edge
between every pair of nodes that correspond to a mutually redundant
pair of edges in $G'_i$.
We construct an MIS $I$ of $J$ and eliminate from $G'_i$
all edges associated with nodes in $J$ that do not appear in $I$.

\subsection{The Three Desired Properties}
\label{sec:properties}

Let $G' = G'_{m}$ be the spanner at the end of phase $m$. We now
prove that $G'$ satisfies the three properties that the output of
{\tt SEQ-GREEDY} was guaranteed to have. The proofs of these
theorems form the technical core of the paper and are presented next
in this section.

\begin{theorem}
For any $0 < \delta \le \frac{t - t_1}{4}$, the output $G'$ is a
$t$-spanner. \label{thm:relaxed.spanner}
\end{theorem}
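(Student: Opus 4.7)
\medskip
\noindent\textbf{Proof proposal.}\
I proceed by induction on the phase index $i$, maintaining the invariant that $G'_i$ is a $t$-spanner of $G_i := (V, E_0 \cup \cdots \cup E_i)$. The base case $i=0$ is Theorem~\ref{theorem:phase0}. Since edges of $G_{i-1}$ inherit their $t$-spanner paths from $G'_{i-1} \subseteq G'_i$, the inductive step reduces to exhibiting, for each $\{u,v\} \in E_i$, a $u$-$v$ path of length at most $t|uv|$ in $G'_i$. I split on whether $\{u,v\}$ is covered.

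If $\{u,v\}$ is covered, then (WLOG) there is $z \in V$ with $\{u,z\} \in G'_{i-1}$, $|vz| \le \alpha$, and $\angle vuz \le \theta$. Every edge of $G'_{i-1}$ has length at most $W_{i-1} < |uv|$, so $|uz| < |uv|$ and Lemma~\ref{lem:edge.path} reduces the task to producing a $t$-spanner $z$-$v$ path. The $\alpha$-UBG property forces $\{v,z\} \in G$; I obtain its spanner path by applying the inductive hypothesis to this shorter edge, falling back on Lemma~\ref{lem:clique} and Theorem~\ref{theorem:phase0} for the lowest bins in which $\{v,z\} \in G_0$.

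If $\{u,v\}$ is not covered, then $\{u,v\} \in E_i[C_a, C_b]$ for a unique cluster pair (with $u \in C_a$, $v \in C_b$), and I let $\{x,y\}$ be the query edge minimising~(\ref{eq:query.pairs}). I consider the candidate path
$$
P:\ u \leadsto a \leadsto x \leadsto y \leadsto b \leadsto v,
$$
whose four intra-cluster segments contribute at most $4\delta W_{i-1}$ in total. The middle $x$-$y$ segment is bounded above by $|xy|$ if $\{x,y\}$ survives the redundancy step, by $t_1 \cdot |xy|$ if $\{x,y\}$ was pruned by the redundancy MIS (via its retained partner), or by $t \cdot |xy|$ if $\{x,y\}$ was never selected (for then $\ssp_{H_{i-1}}(x,y) \le t|xy|$ forces $\ssp_{G'_{i-1}}(x,y) \le t|xy|$ by Lemma~\ref{lem:H.paths}). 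The minimiser property of $\{x,y\}$ yields $t|xy| \le t|uv| + 2\delta W_{i-1}$, and substituting this into the worst-case ($t_1 |xy|$) bound on the middle segment gives $|P| \le t_1|uv| + 4\delta W_{i-1} + (\text{lower-order terms})$. Since $|uv| > W_{i-1}$, the hypothesis $\delta \le (t-t_1)/4$ is precisely what drives $|P| \le t|uv|$.

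The hard part will be the final algebra: the substitution for $t|xy|$ must be paired with just the right expansion of $P$ so that the $4\delta W_{i-1}$ contribution from the cluster segments, together with the $t_1$ factor inherited from the redundancy step, all fit inside the slack $(t - t_1)|uv|$. Carrying out the case split among the three $x$-$y$ subcases and verifying that~(\ref{eq:query.pairs}) is calibrated precisely for this absorption is where I expect the delicate bookkeeping to live.
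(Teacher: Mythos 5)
Your overall architecture — query edges via the minimiser inequality, covered edges via the Czumaj--Zhao lemma, and redundancy handled through the retained MIS partner — matches the paper's, but there are three genuine gaps in the way you've organised and carried out the argument.

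\textbf{The induction is on the wrong quantity.} You induct on the phase index $i$, maintaining that $G'_{i-1}$ spans $G_{i-1}$. In the covered case this leaves you stuck: the covering edge $\{v,z\}$ is shorter than $\{u,v\}$ but may very well lie in the \emph{same} bin $E_i$, in which case your inductive hypothesis about $G'_{i-1}$ says nothing about it. (You also never argue $|vz| < |uv|$; that needs the $\theta < \pi/4$ hypothesis, which is exactly why it appears in Lemma~\ref{lem:edge.path}.) The paper avoids this by inducting on \emph{edge length}, considering a smallest non-query edge of length greater than the induction threshold; then every strictly shorter edge is either a query edge (dispatched up front by a separate, non-inductive argument) or is covered by the hypothesis. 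Your phase-indexed induction cannot be repaired without grafting a secondary induction on length inside each phase, at which point you've essentially switched to the paper's scheme.

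\textbf{Your path is wrong in one of your three subcases.} For a non-query edge $\{u,v\} \in E_i[C_a,C_b]$ with selected query edge $\{x,y\}$, you propose the path $u \leadsto a \leadsto x \leadsto y \leadsto b \leadsto v$ in all three subcases. This works when $\{x,y\}$ is retained (middle segment $|xy|$) and when it is pruned but has a redundant partner (middle segment $t_1|xy|$), provided one carries the minimiser inequality through exactly rather than first weakening it. But when $\{x,y\}$ was \emph{not added} — i.e., $\ssp_{H_{i-1}}(x,y) \le t|xy|$ — the best you get for the middle segment is $t|xy|$, and plugging the minimiser inequality in cleanly yields $|P| \le t|uv| + 2\ssp(a,x) + 2\ssp(b,y) \le t|uv| + 4\delta W_{i-1}$, which overshoots no matter how small $\delta$ is. The paper routes this case differently: it uses the path $u \leadsto a \leadsto \cdots \leadsto b \leadsto v$ directly, exploiting the identity $\ssp_{H_{i-1}}(x,y) = \ssp_{G'_{i-1}}(x,a) + \ssp_{H_{i-1}}(a,b) + \ssp_{G'_{i-1}}(b,y)$ to bound $\ssp_{H_{i-1}}(a,b)$; the detour through $x$ and $y$ never appears and the bound comes out to exactly $t|uv|$ with no $\delta$-slack consumed.

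\textbf{The weakened minimiser bound does not close.} You replace the minimiser inequality~(\ref{eq:query.pairs}) by the coarser consequence $t|xy| \le t|uv| + 2\delta W_{i-1}$ and then also charge the four cluster hops $4\delta W_{i-1}$. This double-counts: you end up with $|P| \le t_1|uv| + \left(4 + \tfrac{2t_1}{t}\right)\delta W_{i-1}$, and the extra $\tfrac{2t_1}{t}\delta W_{i-1}$ is not a lower-order term — it is of the same order as the $4\delta W_{i-1}$ you do account for, and it blows the budget $\delta \le (t-t_1)/4$. The fix is the one the paper uses: keep the minimiser inequality in the form $\ssp(a,x) + \ssp(b,y) \le t|xy| - t|uv| + \ssp(a,u) + \ssp(b,v)$ (paper notation) and substitute it for the matching two cluster hops in the path, so that the $\ssp$ terms cancel and only $4\delta W_{i-1}$ survives against the slack $(t-t_1)W_{i-1}$. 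You flagged that the algebra would be delicate, but the version you sketched genuinely does not go through as written.
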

\begin{proof}
We first prove that the theorem holds for all query edges in $E$,
then we extend the argument to non-query edges as well. Let $\{x,
y\}$ be an arbitrary query edge and let $i \ge 1$ be such that $\{x,
y\} \in E_i$. Then either (i) $\{x, y\}$ is added to the spanner in
phase $i$, or (ii) $\ssp_{H_{i-1}}(x, y) \le t \cdot |xy|$. If the
former is true and $\{x, y\}$ is not a redundant edge, then the
theorem holds. If $\{x, y\}$ is a redundant edge but does not get
removed from $G_i$, then again the theorem holds. If $\{x, y\}$ is a
redundant edge that gets removed from $G_i$, then at least one
mutually redundant counterpart edge must remain in $G_i$ (since
removed edges form an independent set), ensuring a $t$-spanner
$xy$-path in $G_i$. If (ii) is true, then from
Lemma~\ref{lem:H.paths}, $\ssp_{G'_{i-1}}(x, y) \le
\ssp_{H_{i-1}}(x, y)$ (first part of the inequality) and therefore
$\ssp_{G'_{i-1}}(x, y) \le t \cdot |xy|$.

For non-query edges, the proof is by induction on the length of
edges in $G$. The base case corresponds to edges in $E_0$, for which
{\tt SEQ-GREEDY} ensures that the theorem holds.

Assume that the theorem is true for any edge in $E$ of length no
greater than some value $q$, and consider a smallest non-query edge
$\{x, y\}$ in $G$ of length greater than $q$. We prove that
$\ssp_{G'}(x, y) \le t \cdot |xy|$. Let $i$ be such that $\{x, y\}
\in E_i$. We now consider two cases, depending on whether $\{x, y\}$
is a {\em candidate\/} query edge in phase $i$ or not.

If $\{x, y\}$ is not a candidate query edge, then it is a covered
edge. That is, there exists an edge $\{x, z\}$ in $G'_{i-1}$ such
that $|yz| \le \alpha$ and $\angle{yxz} \le \theta$, or an edge
$\{y, z\}$ in $G'_{i-1}$ such that $|xz| \le \alpha$ and
$\angle{xyz} \le \theta$. The two cases are symmetric and so without
loss of generality, assume that the former is true. Here $\theta$
satisfies the hypothesis of the Czumaj-Zhao lemma (Lemma
\ref{lem:edge.path}), that is, $0 < \theta < \frac{\pi}{4}$ and $t
\ge \frac{1}{\cos\theta - \sin\theta}$. Since $|yz| \le \alpha$ and
$G$ is an $\alpha$-UBG, this implies that $\{y, z\}$ is an edge is
$E$. Furthermore, since $0 < \theta < \frac{\pi}{4}$, we have $|yz|
< |xy|$. Refer to Figure~\ref{fig:spanner.proof}a. If $\{y, z\}$ is
a query edge, then by the argument above we have that $G'$ contains
a $t$-spanner $yz$-path $p$. Otherwise, if $\{y, z\}$ is not a query
edge, since its length is less than the length of $\{x, y\}$, by the
inductive hypothesis we get that there is a $t$-spanner $yz$-path
$p$. In either case, Lemma~\ref{lem:edge.path} tells us that $\{x,
z\}$ followed by $p$ is a $t$-spanner path from $x$ to $y$,
completing this case.

\begin{figure}[htbp]
\centering
\begin{tabular}{c@{\hspace{0.2\linewidth}}c}
\includegraphics[width=0.2\linewidth]{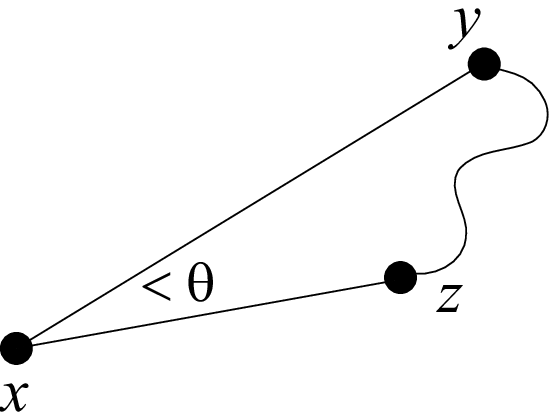} &
\includegraphics[width=0.38\linewidth]{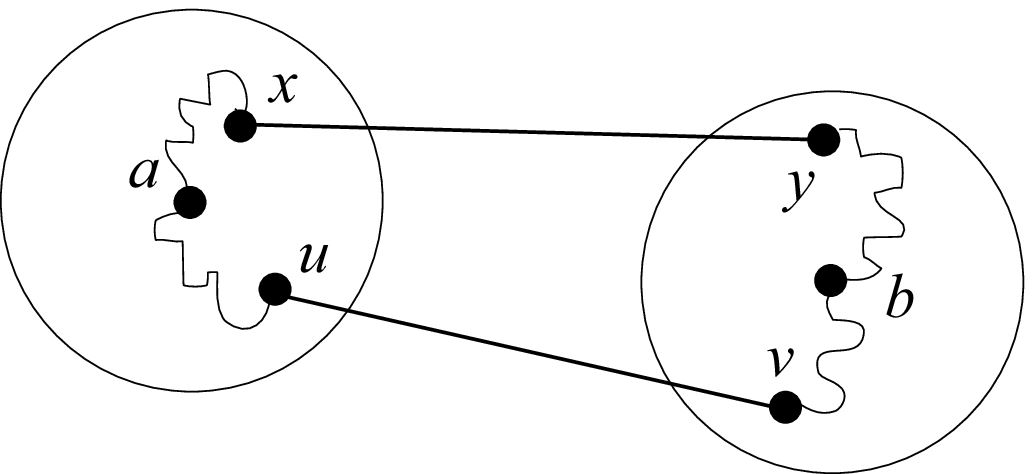} \\
(a) & (b)
\end{tabular}
\caption{(a) $\{x, y\}$ is a covered edge (b) $\{u, v\}$ is a query
edge: if $G_i$ contains a $t$-spanner $uv$-path, then $G_i$ contains
a $t$-spanner $xy$-path.} \label{fig:spanner.proof}
\end{figure}

We now consider the case when $\{x, y\}$ is a candidate query edge
in phase $i$, but not a query edge. Let $a$ and $b$ be such that $x
\in C_a$ and $y \in C_b$, and let $\{u, v\}$ be the query edge
selected in phase $i$, with $u \in C_a$ and $v \in C_b$. Refer to
Figure~\ref{fig:spanner.proof}b. Due to the criteria for selecting
$\{u, v\}$, we have

\begin{eqnarray}
t \cdot |uv| - \ssp_{G'_{i-1}}(a, u) - \ssp_{G'_{i-1}}(b, v) \le
\nonumber
\\
t \cdot |xy| - \ssp_{G'_{i-1}}(a, x) - \ssp_{G'_{i-1}}(b, y)
\label{eq:spannerproof.query}.
\end{eqnarray}

Recall that $G'_i$ is the partial spanner at the end of phase $i$.
We show that $\ssp_{G'_i}(x, y) \le t \cdot |xy|$. We discuss two
cases, depending on whether $\{u, v\}$ was added to $G'_i$ or not.

Assume first that $\{u, v\}$ was not added to $G'_i$. This means
that $\ssp_{H_{i-1}}(u, v) \le t \cdot |uv|$. Note however that
\begin{eqnarray}
\ssp_{H_{i-1}}(u, v) & = & \ssp_{G'_{i-1}}(u, a) + \ssp_{H_{i-1}}(a,
b)
+ \ssp_{G'_{i-1}}(b, v) \nonumber \\
& \le & t \cdot |uv| \label{eq:spannerproof0}.
\end{eqnarray}
We now evaluate
\begin{eqnarray*}
       \ssp_{G'_{i-1}}(x, y) & \le &
              \ssp_{G'_{i-1}}(x, a) + \ssp_{G'_{i-1}}(a, b) +
                  \ssp_{G'_{i-1}}(b, y) \\
       & \le &  \ssp_{G'_{i-1}}(x, a) + \ssp_{H_{i-1}}(a, b) +
                  \ssp_{G'_{i-1}}(b, y) \\
       & \le & t \cdot |xy|.
\end{eqnarray*}
This latter inequality involves simple substitutions that use
inequalities~(\ref{eq:spannerproof.query})
and~(\ref{eq:spannerproof0}), and completes this case.

\smallskip
\noindent
Now assume that $\{u, v\}$ was added to $G'_i$. Since $u \in
C_a$ and $C_a$ has radius $\delta W_{i-1}$, we have that
$\ssp_{G'_{i-1}}(a, u) \le \delta W_{i-1}$. Similarly,
$\ssp_{G'_{i-1}}(b, v) \le \delta W_{i-1}$. These together
with~(\ref{eq:spannerproof.query}) yield
\begin{equation}
t \cdot |uv| - 2\delta W_{i-1} \le t \cdot |xy| -
\ssp_{G'_{i-1}}(a, x) - \ssp_{G'_{i-1}}(b, y)
\label{eq:spannerproof1}.
\end{equation}
If the edge $\{u, v\}$ turns out to be redundant and eliminated from
$G_i$, the existence of a mutually redundant counterpart edge in
$G'_i$ ensures that $\ssp_{G'_i} (u, v) \le t_1 \cdot |uv|$.
This 
enables us to construct in $G'_i$ a path from $a$ to $b$ of weight

\begin{eqnarray}
        \ssp_{G'_i}(a, b) & \le & \ssp_{G'_i}(a, u) + t_1 \cdot |uv| + \ssp_{G'_i}(v, b) \nonumber\\
          & \le & 2 \delta W_{i-1} + t_1 \cdot |uv|,
\label{eq:spannerproof2}
\end{eqnarray}
since $\ssp_{G'_i}(a, u) \le \ssp_{G'_{i-1}}(a, u) \le \delta
W_{i-1}$, and same for $\ssp_{G'_i}(v, b)$. We can now construct a
path in $G'_i$ from $x$ to $y$ of weight
\begin{tabbing}
\=..................\=........\=
......................................................\=...\kill
       \> $\ssp_{G'_i}(x, y)$ \> $\le$
              \> $\ssp_{G'_i}(a, x) + \ssp_{G'_i}(b, y) + \ssp_{G'_i}(a, b) $\\
       \>   \> $\le$ \> $t \cdot |xy| + 2\delta W_{i-1} - t \cdot |u v| + \ssp_{G'_i}(a, b)$\\
       \>   \> $\le$ \> $t \cdot |x y| + 4\delta W_{i-1} - (t-t_1) \cdot |u v|$ \\
       \>   \> $<$ \> $t \cdot |x y| + 4\delta W_{i-1} - (t-t_1)W_{i-1}$
\end{tabbing}
In deriving this chain of inequalities, we have
used~(\ref{eq:spannerproof1}),~(\ref{eq:spannerproof2}) and the fact
that $|u v| > W_{i-1}$. Note that for any $\delta \le
\frac{t-t_1}{4}$, the quantity $4\delta W_{i-1} - (t-1) \cdot
W_{i-1}$ above is negative, yielding $\ssp_{G_i}(x, y) < t \cdot |x
y|$. This completes the proof.
\end{proof}

\begin{figure}[htpb]
\centerline{
\begin{tabular}{cc}
\includegraphics[width = 0.32\linewidth]{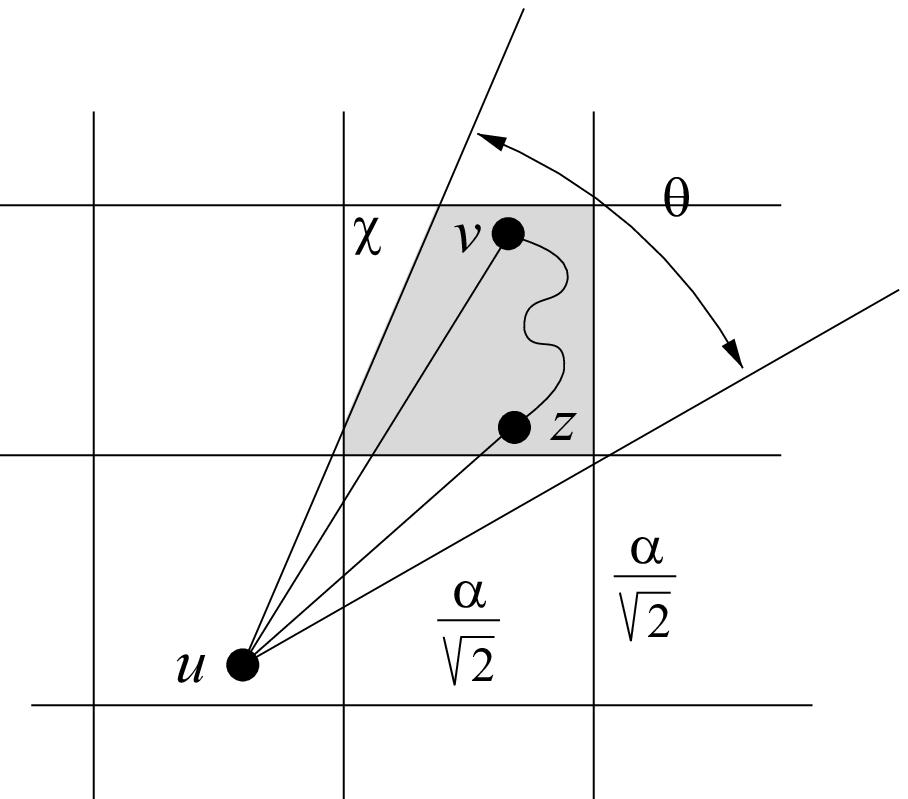} &
\raisebox{2ex}{\includegraphics[width =
0.4\linewidth]{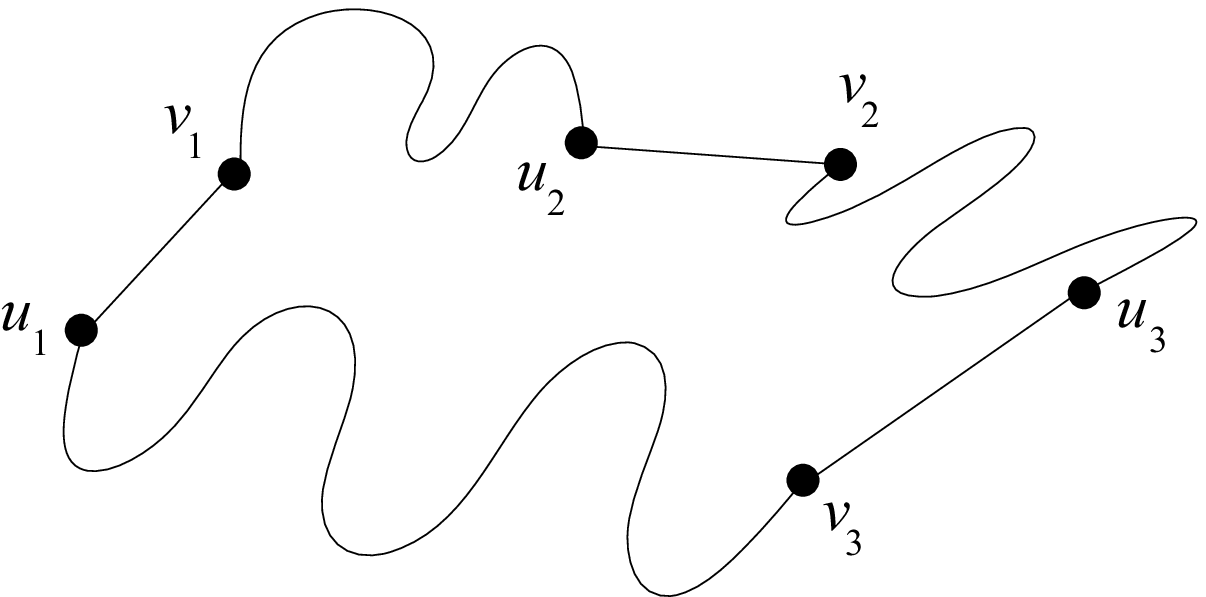}}
\end{tabular}}
\caption{(a) Region $\chi$ contains two neighbors $v$ and $z$ of
$u$. (b) Definition of the $t$-leapfrog property with $S = \{\{u_1,
v_1\}, \{u_2, v_2\}, \{u_3, v_3\}\}$.} \label{fig:sequentialGreedy}
\end{figure}

\begin{theorem}
$G'$ has $O(1)$ degree.
\label{thm:constant.degree}
\end{theorem}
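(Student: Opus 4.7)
The plan is to fix an arbitrary vertex $u$ and bound its degree in $G'$ by combining the cone idea underlying the definition of a \emph{covered} edge with the per-phase, per-cluster bound of Lemma~\ref{lem:query.edges}. First I would dispose of phase~$0$: by Theorem~\ref{theorem:phase0}, $u$ has $O(1)$ neighbors in $G'_0$, so only the edges added in phases $i \ge 1$ remain to be controlled.

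Next, I partition the unit directions around $u$ in $\mathbb{R}^d$ into a constant number of cones, each of angular aperture at most $\theta$, where $\theta$ is the constant fixed in Section~\ref{sec:query.edges}. Fix one such cone $\chi$ and let $N_\chi$ be the set of neighbors $v$ of $u$ in $G'$ with the direction from $u$ to $v$ lying in $\chi$. It suffices to show $|N_\chi| = O(1)$, since then multiplying by the $O(1)$ number of cones yields the theorem.

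The heart of the argument is a dichotomy between pairs of neighbors added in the same phase versus different phases. For the cross-phase case, suppose $v, z \in N_\chi$ are such that $\{u, v\}$ is added in some phase $i$ and $\{u, z\}$ in a strictly later phase $j$. Then $\{u, v\} \in G'_{j-1}$ when $\{u, z\}$ is processed, and $\angle zuv \le \theta$ because $v, z \in \chi$. If $|vz| \le \alpha$, then the definition in Section~\ref{sec:query.edges} (with $v$ playing the role of the witness $z$) makes $\{u, z\}$ a covered edge in phase~$j$, so it cannot become a query edge and cannot be added to $G'$, contradicting $z \in N_\chi$. Hence any two cross-phase vertices in $N_\chi$ are at Euclidean distance strictly greater than $\alpha$. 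For the same-phase case, Lemma~\ref{lem:query.edges} bounds by $O(1)$ the number of query edges incident to the cluster containing $u$ in any phase, so at most $O(1)$ edges incident to $u$ are added to $G'$ in any single phase (the redundant-edge removal step only decreases this count).

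Finally, I would close with a Euclidean packing argument. Picking one representative per phase that contributes a neighbor to $N_\chi$ gives a set of points inside the cone $\chi$, within distance~$1$ of $u$, that are pairwise at distance greater than $\alpha$; the standard packing bound (disjoint balls of radius $\alpha/2$ inside a ball of radius $1 + \alpha/2$) allows only $O((1/\alpha)^d) = O(1)$ such points for constant $\alpha$ and $d$. Multiplying this by the $O(1)$ bound per contributing phase gives $|N_\chi| = O(1)$, and summing over the $O(1)$ cones yields $\Delta(G') = O(1)$. The step I expect to require the most care is verifying that the covered-edge definition interacts cleanly with the cluster-based query-edge selection and with redundant-edge removal, but these reduce to straightforward bookkeeping once the cone/packing skeleton is in place.
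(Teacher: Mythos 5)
Your proof is correct and follows essentially the same approach as the paper: both rely on the $\theta$-cone decomposition around $u$, the covered-edge mechanism (a neighbor from an earlier phase lying within $\alpha$ and angle $\theta$ of a later edge forces that later edge to be covered, hence not a query edge), and Lemma~\ref{lem:query.edges} for the $O(1)$ per-phase bound. The only cosmetic difference is that the paper further partitions each cone into grid cells of diameter $\alpha$ and argues all neighbors in one cell belong to a single phase, whereas you bound the number of contributing phases per cone directly via a ball-packing argument at scale $\alpha$; these are two equivalent ways of organizing the same count.
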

\begin{proof}
Let $\theta$ be a quantity satisfying the conditions of
Lemma~\ref{lem:edge.path}. Fix a vertex $u$ and consider the
$d$-dimensional unit radius ball centered at $u$. For some $T$ that
depends only on $\theta$ and $d$, this ball can be partitioned into
$T$ cones, each with apex $u$, such that for any $x$, $y$ in a cone,
$\angle xuy \le \theta$.
Yao~\cite{Yao82} shows how to construct such a partition with $T =
O(d^{3/2}\cdot \sin^{-d}(\theta/2)\cdot \log(d
\sin^{-1}(\theta/2)))$ cones.
Place an infinite axis-parallel grid of $d$-dimensional cubes, each
of dimension $\frac{\alpha}{\sqrt{d}} \times \frac{\alpha}{\sqrt{d}}
\times \cdots \times \frac{\alpha}{\sqrt{d}}$, on the plane. See
Figure~\ref{fig:sequentialGreedy}(c) for a 2-dimensional version of
this picture. There are $O(1/\alpha^d)$ cells that intersect the
unit ball centered at $u$, and therefore there are $O(1/\alpha^d)$
cells that intersect each cone in the cone partition of this unit
ball. Thus the cones and the square cells together partition the
unit ball centered at $u$ into $O(T/\alpha^d)$ regions. We show that
in $G'$, $u$ has 
$O(\frac{t^d(4\delta+r)^d}{\delta^d})$ neighbors in each region,
which is a constant.

Let $v_1, v_2, \ldots, v_k$ be neighbors of $u$ in $G'$ that lie in
a region $\chi$. Without loss of generality, assume that $|uv_1| \ge
|uv_j|$, for $j=2, \ldots, k$, and let $i$ be such that $\{u, v_1\}
\in E_i$. Since $|uv_j| \le |uv_1|$, we have that for all $j = 2,
\ldots, k$, $\{u, v_j\} \in E_{\ell}$, with $\ell \le i$.

We now prove that $\{u, v_j\}$ is in fact in $E_i$ for all $j$. To
derive a contradiction, assume that there is a $j > 1$ such that
$\{u, v_j\} \in E_{\ell}$, with $\ell < i$. This means that just
before edge $\{u, v_1\}$ is processed, $G'$ contains edge $\{u,
v_j\}$. Also note that since $v_1$ and $v_j$ lie in the same region,
$|v_1v_j| \le \alpha$. But, this means that $\{u, v_1\}$ is a
covered edge in phase $i$ and will not be queried. This contradicts
the presence of edge $\{u, v_1\}$ in $G'$.

We have shown that $\{u, v_j\} \in E_i$ for all $j$. Recall that our
algorithm picks a unique query edge per pair of clusters. This along
with Lemma~\ref{lem:query.edges} proves that $k$ is constant.
\end{proof}

\smallskip
\noindent In the next theorem, we show that the spanner produced by
the algorithm has small weight. The proof relies on the line
segments in the spanner satisfying a property known as the {\em
leapfrog property}
\cite{CzumajZhao,GudmundssonLevcopoulosNarasimhan}. For any $t \ge
t_2 > 1$, a set of line segments, denoted $F$, has the {\em $(t_2,
t)$-leapfrog property\/} if for every subset $S = \{\{u_1, v_1\},
\{u_2, v_2\}, \ldots, \{u_s, v_s\}\}$ of $F$
\begin{equation}
t_2 \cdot |u_1 v_1| < \sum_{i=2}^s |u_i v_i| + t \cdot
\Big(\sum_{i=1}^{s-1} |v_i u_{i+1}| + |v_s u_1|\Big).
\label{eq:leapfrog}
\end{equation}
Informally, this definition says that if there exists an edge
between $u_1$ and $v_1$, then any path not including $\{u_1, v_1\}$
must have length greater than $t_2 |u_1 v_1|$ (see Figure
\ref{fig:sequentialGreedy}(c) for an illustration of this
definition). The following implication of the $(t_2, t)$-leapfrog
property was shown by Das and Narasimhan \cite{DasNarasimhan97}.

\begin{lemma}
Let $~t \ge t_2 > 1$. If the line segments $F$ in $d$-dimensional
space satisfy the $(t_2, t)$-leapfrog property, then $wt(F) =
O(wt(MST))$, where $MST$ is a minimum spanning tree connecting the
endpoints of line segments in $F$. The constant in the asymptotic
notation depends on $t$, $t_2$ and $d$.
\end{lemma}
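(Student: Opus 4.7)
The plan is to derive the weight bound from the leapfrog property via a gap property and a charging argument against the MST, following the standard Das--Narasimhan template.

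First I would extract a \emph{gap property} from the $(t_2, t)$-leapfrog condition by specializing inequality~(\ref{eq:leapfrog}) to $s = 2$. For any two segments $\{u_1, v_1\}, \{u_2, v_2\} \in F$ with $|u_1 v_1| \ge |u_2 v_2|$, rearranging gives $|v_1 u_2| + |v_2 u_1| > ((t_2 - 1)/t)\,|u_1 v_1| - (1/t)|u_2 v_2|$, so at least one of the four endpoint-to-endpoint distances is $\Omega(|u_1 v_1|)$, with a constant depending only on $t_2$ and $t$. This means any two edges in $F$ of comparable length have well-separated endpoints in $\R^d$.

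Next I would bucket the edges by length into classes $F_j = \{e \in F : 2^j \le |e| < 2^{j+1}\}$. Within each $F_j$, the gap property combined with a standard $d$-dimensional ball-packing bound implies that any Euclidean ball of radius $O(2^j)$ contains $O(1)$ endpoints of edges from $F_j$. I would then define a charging scheme in which each edge $e \in F_j$ is charged to the portion of $MST$ contained in an $O(2^j)$-neighborhood of $e$. The key claim to establish is a ``local MST-weight lower bound'': the gap property forces the endpoints in $F_j$ to be spread out, so the Steiner-tree subgraph of $MST$ connecting the endpoints of edges in $F_j$ within any such neighborhood has weight $\Omega(2^j)$, which then pays for the weight of $e$. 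Summing the bucket contributions telescopes: each $MST$ edge is charged only $O(1)$ times per bucket and the charges decay geometrically across buckets, yielding $w(F) = O(w(MST))$.

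The main obstacle will be implementing the charging rigorously so that no $MST$ edge is over-charged either within a bucket or across buckets. Concretely, one must show that (a) across the $O(\log(L_{\max}/L_{\min}))$ buckets, the total charge on any $MST$ edge of length $\ell$ is dominated by the buckets with $2^j \approx \ell$ (so the telescoping sum converges), and (b) the constant in the gap property is large enough that the ``overlap'' of charge-regions for different edges in $F_j$ is $O(1)$, depending only on $d$ via the packing number and on $t_2, t$ via the separation constant. Once these two technical inequalities are in place, the final constant appearing in $w(F) = O(w(MST))$ is a function of $t$, $t_2$, and $d$, matching the statement of the lemma.
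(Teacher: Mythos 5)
The paper does not actually prove this lemma; it cites it directly from Das and Narasimhan \cite{DasNarasimhan97} with no in-paper argument. So there is no ``paper's proof'' for your sketch to be compared against, and it has to stand on its own.

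On its own, your sketch has a genuine gap, and it is a well-known one. You begin by \emph{specializing the leapfrog inequality to} $s = 2$. But the $s = 2$ case is, up to constants, exactly the classical \emph{gap property} of Chandra, Das, Narasimhan, and Soares, and the gap property is provably too weak for the conclusion you want: a set of segments satisfying only the pairwise gap property can have total weight $\Theta(\log n \cdot w(MST))$, and that bound is tight. The extra strength of the $(t_2, t)$-leapfrog property lies precisely in the constraints it imposes for subsets of size $s \ge 3$, and nothing in your sketch ever invokes those. The place where this bites is the very step you flag as an ``obstacle'': you assert that ``the charges decay geometrically across buckets,'' but with only the pairwise gap property there is no such decay --- each of the $\Theta(\log(L_{\max}/L_{\min}))$ length scales can independently charge $\Theta(w(MST))$ to the tree, and summing gives a $\log$ factor, not a constant. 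The geometric decay is not something that falls out of the bucketing; it is what the full leapfrog property is needed to buy, via an argument (as in Das--Narasimhan) that exploits large subsets $S$ to show that heavy buckets at many scales simultaneously would contradict inequality~(\ref{eq:leapfrog}) for some $s \ge 3$. Until your outline uses the leapfrog inequality for $s \ge 3$ somewhere, it can only reach $O(\log n \cdot w(MST))$. (There is also a minor algebra slip in your gap-property derivation --- from $t_2 |u_1 v_1| < |u_2 v_2| + t(|v_1 u_2| + |v_2 u_1|)$ you get $\frac{t_2}{t}|u_1 v_1| - \frac{1}{t}|u_2 v_2|$ on the left, not $\frac{t_2 - 1}{t}|u_1 v_1| - \frac{1}{t}|u_2 v_2|$; this does not affect the asymptotics but should be fixed.)

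A second, smaller issue: you claim the ``Steiner-tree subgraph of $MST$ connecting the endpoints of edges in $F_j$ within any such neighborhood has weight $\Omega(2^j)$.'' The $MST$ path between the two endpoints of an edge $e \in F_j$ need not stay inside an $O(2^j)$-neighborhood of $e$, so local $MST$ weight near $e$ can be small. The standard fix is to charge along the entire $MST$ path (not a local piece), which makes the over-charging bookkeeping even more delicate and again requires the full leapfrog machinery to control.
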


\begin{theorem}
Let $0< \delta < \min\{(t-1)/(6+2t), (t-t_1)/4\}$. Let $t_\delta$
denote $t_1 \cdot (1-2\delta)/(1+6\delta)$. Let $1 < r < (t_\delta +
1)/2$. When the relaxed  greedy algorithm is run with these values
of $\delta$ and $r$, the output $G'$ satisfies $w(G') =
O(wt(MST(G)))$. \label{thm:boundedweight}
\end{theorem}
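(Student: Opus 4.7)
The plan is to split $G' = G'_0 \cup F$, where $F$ is the set of edges added in phases $i \ge 1$, and bound each piece separately. Theorem~\ref{theorem:phase0}(iii) immediately gives $w(G'_0) = O(w(MST(G)))$, so it suffices to show $w(F) = O(w(MST(G)))$. For this I would establish that the line segments of $F$ satisfy the $(t_\delta, t_1)$-leapfrog property and apply the preceding leapfrog lemma, which yields $w(F) = O(w(MST_E))$, where $MST_E$ denotes a Euclidean minimum spanning tree on the endpoints of $F$. A standard shortcutting argument on an Euler tour of $MST(G)$ then gives $w(MST_E) \le 2\,w(MST(G))$, and combining the three bounds proves the theorem.

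Before applying the leapfrog lemma I would verify that the parameter hypotheses are met. From $\delta < (t-1)/(6+2t)$ one gets $(1+6\delta)/(1-2\delta) < t$, so choosing $t_1 \in \bigl((1+6\delta)/(1-2\delta),\,t\bigr)$ forces $t_\delta = t_1(1-2\delta)/(1+6\delta) > 1$ and $t_\delta < t_1 \le t$. Hence the leapfrog lemma applies with $t_2 := t_\delta$ and its ``$t$'' set to $t_1$.

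The core of the argument is the leapfrog property for $F$, which I would prove by contradiction. Suppose some $S = \{\{u_1,v_1\},\ldots,\{u_s,v_s\}\} \subseteq F$ violates
\[
t_\delta\,|u_1 v_1| \;\ge\; \sum_{k=2}^{s}|u_k v_k| \;+\; t_1\Bigl(\sum_{k=1}^{s-1}|v_k u_{k+1}| + |v_s u_1|\Bigr),
\]
and among all such counterexamples choose one in which $\{u_1,v_1\}$ lies in the latest bin $E_i$. Since $\{u_1,v_1\} \in F$ survived the query step in phase $i$, we have $\ssp_{H_{i-1}}(u_1,v_1) > t\,|u_1 v_1|$. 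I would then construct a $u_1 v_1$-path in $G'_{i-1}$ of weight at most $t_\delta\,|u_1 v_1|$ by (i) using edges of $S$ from strictly earlier bins directly, (ii) replacing each gap segment $\{v_k, u_{k+1}\}$ (whose Euclidean length is less than $|u_1 v_1|$ by the assumed violation) with a $t_1$-spanner path obtained inductively, and (iii) substituting each same-bin edge of $S$ by a $t_1$-spanner detour guaranteed by the redundancy-removal step together with the constraint $r < (t_\delta+1)/2$. Lifting this path into $H_{i-1}$ via Lemma~\ref{lem:H.paths} inflates its length by at most a factor of $(1+6\delta)/(1-2\delta)$, producing $\ssp_{H_{i-1}}(u_1,v_1) \le \frac{1+6\delta}{1-2\delta}\,t_\delta\,|u_1 v_1| = t_1\,|u_1 v_1| \le t\,|u_1 v_1|$, a contradiction.

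The main obstacle is the treatment of same-bin edges of $S$. Because the update of $G'$ within a phase is lazy, such edges are absent from $G'_{i-1}$ and cannot be used verbatim in the alternate path; the constraint $r < (t_\delta+1)/2$ keeps their Euclidean lengths geometrically close to $|u_1 v_1|$, and the redundancy-removal step at the end of phase $i$ forbids two same-bin edges in $S$ from being mutually redundant. Combining these two facts with Lemma~\ref{lem:query.edges} to bound the number of same-bin query edges per cluster is what allows the alternate path to be repaired using only earlier-bin edges plus short detours while staying inside the $t_\delta\,|u_1 v_1|$ budget; verifying that all the constants $\delta$, $r$, $t_1$, $t$ line up consistently across this repair is where the bulk of the technical work sits.
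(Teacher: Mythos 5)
Your high-level plan — split off $G'_0$, establish a leapfrog property for the remaining edges, apply the Das--Narasimhan leapfrog lemma, and relate the Euclidean MST to $MST(G)$ by shortcutting an Euler tour — is in the right family of arguments, but it drops a device that the paper's proof depends on essentially: the partition of $G'$ into subfamilies $F_0, F_1, \ldots, F_\ell$ by Euclidean length (with $F_j$ containing edges of length in $(\alpha\beta^{j-1}, \alpha\beta^j]$). That partition guarantees that within a single $F_j$, $j>0$, every edge has length $> |u_1 v_1|/\beta$, which is what lets the paper dispose of the $|S|\ge 3$ case and the $|S|=2$, ``big-gap'' case by pure magnitude estimates (e.g.\ needing $t_2 < 2/\beta$ and $t_2 \le t\alpha + 1/\beta$). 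Your argument instead tries to handle an unpartitioned $F$ by replacing each gap segment $\{v_k, u_{k+1}\}$ with a spanner path, but this breaks when $|v_k u_{k+1}| > \alpha$: such a pair need not be an edge of the $\alpha$-UBG at all, and the $t$-spanner property only bounds $\ssp_{G'}(v_k, u_{k+1})$ in terms of $\ssp_G(v_k, u_{k+1})$, which can be far larger than $|v_k u_{k+1}|$. So your induction has no handle on those gaps, and the detour path you construct is not length-controlled. The paper avoids this entirely: its path-substitution argument is confined to $F_0$, where $|u_1 v_1| \le \alpha$ forces every gap (after the trivial case is eliminated) to be $\le \alpha$ and hence an edge of $G$.

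There are also two more local problems. First, a parameter slip: you aim for the $(t_\delta, t_1)$-leapfrog property, but the only spanner guarantee available for the gap substitutions is the $t$-spanner property of $G'_{i-1}$, so the inequality you can hope to derive is the $(t_\delta, t)$ form (or, as the paper does, a $(t_2, t)$ form with $t_2$ constrained by the several inequalities in~(\ref{eq:conditions}), not just $t_2 = t_\delta$). Second, your treatment of same-bin edges in $S$ via ``a $t_1$-spanner detour guaranteed by the redundancy-removal step'' glosses over the case analysis that the paper carries out on $|S \cap E_i|$: if $|S \cap E_i|\ge 3$ you cannot appeal to mutual redundancy at all (that is a pairwise notion), and the paper instead uses the crude lower bound $w(P)\ge 2W_{i-1}$ combined with $t_2 < 2/r$; if $|S\cap E_i| = 2$ you must actually derive the two mutual-redundancy inequalities from the assumed leapfrog violation (not merely assert them), which is where the constraint $r < (t_\delta+1)/2$ and the $t_\delta/t_1$ rescaling through Lemma~\ref{lem:H.paths} enter. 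As written, the proposal does not reconstruct that chain, so the crucial contradiction is not established.
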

\begin{proof}
Let $\beta > 1$ be a constant picked as follows. When $t\alpha < 1$,
pick $\beta$ satisfying $1 < \beta < \min\{2, 1/(1-t\alpha)\}$.
Otherwise, pick $\beta$ satisfying $1 < \beta < 2$. Partition the
edges of $G'$ into subsets $F_0, F_1, \ldots$ such that $F_0 =
\{\{u, v\} \in G' \mid |uv| \le \alpha\}$ and for each $j > 0$, $F_j
= \{\{u, v\} \in G' \mid \alpha \beta^{j-1} < |uv| \le \alpha
\beta^j\}$. Let $\ell = \lceil \log_\beta \frac{1}{\alpha} \rceil$.
Then every edge in $G'$ is in some subset $F_j$, $0 \le j \le \ell$.
We will now show that each $F_j$ satisfies the $(t_2, t)$-leapfrog
property, for any $t_2$ satisfying:
\begin{equation}
1 \le t_2 < \min\{\frac{t_\delta + 1}{r} - 1, \frac{2}{r},
\frac{t}{r}, \frac{2}{\beta}, t\alpha + \frac{1}{\beta}\}.
\label{eq:conditions}
\end{equation}
It is easy to check that our choice for $\delta$, $r$, and $\beta$
guarantee that each quantity inside the min operator is strictly
greater than 1. Showing the $(t_2, t)$-leapfrog property for $F_j$
would imply that $w(F_j) = O(w(MST(G)))$, and since the edges of
$G'$ are partitioned into a constant number of subsets $F_j$, $w(G')
= O(w(MST(G)))$.

Consider an arbitrary subset $S = \{\{u_1, v_1\}, \{u_2, v_2\},
\ldots,$  $\{u_s, v_s\}\} \subseteq F_0$ . To prove
inequality~(\ref{eq:leapfrog}) for $S$, it suffices to consider the
case when $\{u_1, v_1\}$ is a longest edge in $S$. We consider $F_0$
separately from $F_j$, $j > 0$.

\smallskip
\noindent {\bf The $F_0$ case.} If for any $1 \le k < s$, $|v_k
u_{k+1}|
> |u_1 v_1|$ or $|v_s u_1|
> |u_1 v_1|$, then the leapfrog property holds. So we assume that
for all $1 \le k < s$, $|v_k u_{k+1}| \le |u_1 v_1|$ and $|v_s u_1|
\le |u_1 v_1|$. Let $i$ be the phase in which $\{u_1, v_1\}$ gets
processed, i.e., $\{u_1, v_1\} \in E_i$. Since $|u_1 v_1| \le
\alpha$, it is the case that for all $1 \le k < s$, $|v_k u_{k+1}|
\le \alpha$ and $|v_s u_1| \le \alpha$. Hence, $\{\{v_s, u_1\} \}
\cup \{\{v_k, u_{k+1}\} \mid 1 \le k < s\}$ is a subset of edges of
$G$ and each edge in this set gets processed in phase $i$ or
earlier.

Assume first that at least one edge in the set $\{\{v_s, u_1\} \}
\cup \{\{v_k, u_{k+1}\} \mid 1 \le k < s\}$ gets processed in phase
$i$. Then the right hand side of inequality~(\ref{eq:leapfrog}) is
at least $t W_{i-1}$, since edges in $E_i$ have weights in the
interval $I_i = (W_{i-1}, r W_{i-1}]$. Also since $t_2|u_1 v_1| \le
t_2 r W_{i-1}$, and since the inequality $ t_2 r W_{i-1} < t
W_{i-1}$ is guaranteed by the values of $r$ and $t_2$
in~(\ref{eq:conditions}), the leapfrog property holds for this case.

Assume now that all edges in $\{\{v_s, u_1\} \} \cup \{\{v_k,
u_{k+1}\} \mid 1 \le k < s\}$ have been processed in phase $i-1$ or
earlier, meaning that $t$-spanner paths between their endpoints
exist in $G'_{i-1}$ at the time $\{u_1, v_1\}$ gets processed. For
$1 \le k < s$, let $P_k$ be a shortest $v_k u_{k+1}$-path in
$G'_{i-1}$, and let $P_s$ be a shortest $v_s u_1$-path in
$G'_{i-1}$. Let $P$ be the following $u_1 v_1$-path in $G'_i$: $P =
P_1 \oplus \{u_2, v_2\} \oplus P_2 \oplus \{u_3, v_3\} \oplus \cdots
\oplus P_s$. Here, we use $\oplus$ to denote concatenation. We
distinguish three cases, depending on the size of the subset $S \cap
E_i$.
\begin{itemize}
\item[(i)] $|S \cap E_i| > 2$. Then, $w(P) \ge 2 W_{i-1}$. We also have
that $|u_1 v_1| \le r W_{i-1}$, since $\{u_1, v_1\} \in E_i$. It
follows that $w(P) > t_2 |u_1 v_1|$ for any $t_2 < \frac{2}{r}$.
Furthermore, $w(P)$ is no greater than the right hand side of the
$(t_2,t)$-leapfrog inequality~(\ref{eq:leapfrog}), so lemma holds
for this case as well.

\item[(ii)] $|S \cap E_i| = 2$.
In addition to $\{u_1, v_1\}$, assume that $\{u_k, v_k\} \in E_i$
for some $k$, $1 < k \le s$. It the $(t_2,t)$-leapfrog
inequality~(\ref{eq:leapfrog}) holds, we are done and so let us
assume the opposite of that:
\begin{equation}
t_2 \cdot |u_1 v_1| \ge \sum_{i=2}^s |u_i v_i| + t \cdot
\Big(\sum_{i=1}^{s-1} |v_i u_{i+1}| + |v_s u_1|\Big).
\label{eq:leapfrogOpposite}
\end{equation}
Since all edges $\{u_j, v_j\}$, $1 \le j \le s$, except for $\{u_1,
v_1\}$ and $\{u_k, v_k\}$ are in $G'_{i-1}$, and since $G'_{i-1}$
contains $t$-spanner $v_j u_{j+1}$-paths for all $j$, $1 \le j < s$,
and a $t$-spannner $v_s u_1$-path, the above inequality yields
$$t_2 \cdot |u_1 v_1| \ge \ssp_{G'_{i-1}}(v_1, u_k) + |u_k v_k| + \ssp_{G'_{i-1}}(v_k, u_1).$$
Multiplying both sides by $(1+6\delta)/(1-2\delta)$ and using $t_2 <
t_\delta$ (which is implied by our choice of $t_2$) and Lemma
\ref{lem:H.paths}, we get
\begin{equation}
t_1 \cdot |u_1 v_1| \ge \ssp_{H_{i-1}}(v_1, u_k) + |u_k v_k| +
\ssp_{H_{i-1}}(v_k, u_1). \label{eq:mutuallyRedundant1}
\end{equation}

Let $\Delta = \sum_{i=1}^{s-1} |v_i u_{i+1}| + |v_s u_1|$. We now
observe that 
\begin{equation}
t_\delta\cdot |u_k v_k | < \sum_{i=1}^{k-1} |u_i v_i| +
\sum_{i=k+1}^s |u_i v_i| + t \cdot \Delta \label{eq:oneStep}
\end{equation}
implies the $(t_2, t)$-leapfrog property. To see this use the fact
that both $\{u_1, v_1\}$ and $\{u_k, v_k\}$ belong to $E_i$ and
therefore $|u_1 v_1| < r\cdot |u_k v_k|$, which substituted
in~(\ref{eq:oneStep}) yields:
$$t_\delta\cdot |u_k v_k | - (r-1)\cdot |u_k v_k| < \sum_{i=2}^{s} |u_i v_i| +
t \cdot \Delta.$$ We get the lower bound $t_2 \cdot |u_1 v_1|$ on
the left hand side of the above inequality by using $|u_k v_k| >
|u_1 v_1|/r$ again and our choice of $t_2 < (t_\delta + 1)/r - 1$.
This yields the $(t_2, t)$-leapfrog property. So we assume that
inequality (\ref{eq:oneStep}) does not hold, that is,
$$t_\delta\cdot |u_k v_k | \ge \sum_{i=1}^{k-1} |u_i v_i| + \sum_{i=k+1}^s |u_i v_i| +
t \cdot \Delta.$$ Since all edges $\{u_j, v_j\}$, $1 \le j \le s$,
except for $\{u_1, v_1\}$ and $\{u_k, v_k\}$ are in $G'_{i-1}$, and
since $G'_{i-1}$ contains $t$-spanner $v_j u_{j+1}$-paths for all
$j$, $1 \le j < s$, and a $t$-spannner $v_s u_1$-path, the above
inequality yields
$$t_\delta\cdot |u_k v_k | \ge  \ssp_{G'_{i-1}}(v_1, u_k) + |u_1 v_1| + \ssp_{G'_{i-1}}(v_k, u_1).$$
Multiplying both sides by $(1+6\delta)/(1-2\delta)$ and using
Lemma \ref{lem:H.paths}, we get
\begin{equation}
t_1 \cdot |u_k v_k| \ge \ssp_{H_{i-1}}(v_1, u_k) + |u_1 v_1| +
\ssp_{H_{i-1}}(v_k, u_1). \label{eq:mutuallyRedundant2}
\end{equation}
Inequalities (\ref{eq:mutuallyRedundant1}) and (\ref{eq:mutuallyRedundant2}) imply
that edges $\{u_1, v_1\}$ and $\{u_2, v_2\}$ are mutually redundant
and therefore cannot both exist in the spanner --- a contradiction.
\item[(iii)] $|S \cap E_i| = 1$. This means that $P$ exists in
$G'_{i-1}$ at the time $\{u_1, v_1\}$ is processed. Furthermore,
$w(P)> t \cdot |u_1 v_1| > t_2 \cdot |u_1 v_1|$, otherwise $\{u_1,
v_1\}$ would not have been added to the spanner, a contradiction.
\end{itemize}

\noindent {\bf The $F_j$ case, $j > 0$.} In this case, $|u_k v_k|
> |u_1 v_1|/\beta$ for all $k = 2, 3, \ldots, s$. If $|S| \ge 3$,
then the right hand side of the $(t_2, t)$-leapfrog
inequality~(\ref{eq:leapfrog}) is at least $2 \cdot |u_1 v_1|/\beta$
and therefore the $(t_2, t)$-leapfrog inequality goes through for
any $1 < t_2 < 2/\beta$. Otherwise, if $|S| = 2$, then we need to
show that $t_2 \cdot |u_1 v_1| < |u_2 v_2| + t \cdot (|u_1 v_2| +
|u_2 v_1|)$. If each of $|u_1 v_2|$ and $|u_2 v_1|$ is at most
$\alpha$, then using the same argument as in the $F_0$-case with $|S
\cap E_i| = 2$, we can show that $\{u_1, v_1\}$ and $\{u_2, v_2\}$
are mutually redundant and will not both exist in the spanner.
Otherwise, if one of $|u_1 v_2|$ or $|u_2 v_1|$ is greater than
$\alpha$, then the right hand side of the $(t_2, t)$-leapfrog
inequality~(\ref{eq:leapfrog}) is greater than $|u_1 v_1|/\beta + t
\alpha$. To ensure that the inequality goes through, we require that
$t_2 \cdot |u_1 v_1| \le \frac{|u_1 v_1|}{\beta} + t \alpha$. Since
$|u_1 v_1| \le 1$, the above inequality is satisfied for any $1 <
t_2 \le t\alpha + \frac{1}{\beta}$, which holds true
cf.~(\ref{eq:conditions}).
\end{proof}

\section{Distributed Relaxed Greedy Algorithm}
\label{sec:algorithm}

We now describe a distributed version of the relaxed greedy
algorithm from Section~\ref{sec:relaxedGreedyAlgorithm}. Like the
sequential relaxed greedy algorithm, this algorithm also runs in
$O(\log n)$ phases  --- with edges in  $E_i$ being processed in
phase $i$. We will show that edges in $E_0$ can be processed in
$O(1)$ rounds. Recall that each subsequent phase consists of the
following five steps: (i) computing a  cluster cover of $G'_{i-1}$,
(ii) selecting query edges in $E_i$, (iii) computing a cluster graph
$H_{i-1}$ of $G'_{i-1}$, (iv) answering shortest path queries for
selected query edges, and (v) deleting some redundant edges. We will
show that Steps (ii), (iii), and (iv) can be completed in $O(1)$
rounds and Steps (i) and (v) take $O(\log^* n)$ rounds. Step (i) and
Step (v) will each involve computing an MIS in a certain derived
graph and in both cases, we will show that the derived graph is a
UBG that resides in a metric space of constant doubling dimension.
Putting this all together, we will show that the algorithm runs in
$O(\log n \cdot \log^* n)$ communication rounds.

\subsection{Distributed Processing of Short Edges}
Lemma~\ref{lem:clique} implies that vertices in the same component
of $G_0 = G[E_0]$ induce a clique and therefore can communicate in
one hop with each other. In the distributed version of the
algorithm, each vertex $u$ obtains the topology of its closed
neighborhood along with pairwise distances between neighbors in one
hop. Using this information, $u$ determines the connected component
$C$ of $G_0$ that it belongs to. Then $u$ simply runs {\tt
SEQ-GREEDY} on $C$ and computes a $t$-spanner of $C$. Finally, $u$
identifies the edges of the $t$-spanner incident on itself and
informs all its neighbors of this.

\begin{theorem}
The edges in $E_0$ can be processed in $O(1)$ rounds of communication.
\end{theorem}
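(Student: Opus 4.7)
The plan is to leverage Lemma~\ref{lem:clique} (the clique property of $G_0$'s components) to reduce the processing of $E_0$ to purely local computation, preceded and followed by a constant number of communication rounds.

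First, I would observe that since every connected component $C$ of $G_0$ induces a clique in $G$, every vertex $u \in C$ is a one-hop neighbor in $G$ of every other vertex in $C$. Thus, all information about $C$ is in principle available in $u$'s closed neighborhood. In a single round of communication, each vertex $v$ sends its ID and position to all of its $G$-neighbors. After this round, each $u$ knows the position (and hence Euclidean distance to $u$) of every $G$-neighbor; from this, $u$ identifies the subset $N_0(u)$ of neighbors at distance at most $\alpha/n$, which is exactly the set of $G_0$-neighbors of $u$. By Lemma~\ref{lem:clique}, $C = \{u\} \cup N_0(u)$, so $u$ already knows the full vertex set of $C$ and, via the positions, all pairwise distances within $C$.

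Second, each $u \in C$ runs {\tt SEQ-GREEDY} locally on the complete weighted graph on $C$ (a clique whose edge weights are these Euclidean distances). Since all vertices in $C$ share identical input and the algorithm is deterministic, each $u \in C$ produces exactly the same spanner $G'_0[C]$ without any further coordination. Vertex $u$ then extracts the edges of $G'_0[C]$ incident on itself, and in one final communication round informs the corresponding endpoints that these edges are retained. The correctness guarantees of Theorem~\ref{theorem:phase0} follow immediately from the fact that each component's spanner is computed by {\tt SEQ-GREEDY} on a clique.

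The total communication cost is thus bounded by a constant: one round for neighborhood/position dissemination and one round to announce retained edges; the $t$-spanner computation itself is carried out locally with no message exchanges. The only mild subtlety is the amount of information exchanged per round: by the clique property the component is contained in a single hop, so transmitting positions to neighbors in parallel gives each $u$ everything it needs without any multi-hop propagation. This is what allows the bound to be $O(1)$ rounds rather than depending on $|C|$.
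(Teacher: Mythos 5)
Your overall plan matches the paper's: Lemma~\ref{lem:clique} guarantees every $G_0$-component is contained in a single $G$-one-hop neighborhood, so a constant number of rounds suffices to gather the needed information, run {\tt SEQ-GREEDY} locally, and announce the result. However, there is a genuine flaw in the step where you identify the component. You claim that by Lemma~\ref{lem:clique}, $C = \{u\} \cup N_0(u)$, where $N_0(u)$ is the set of $G_0$-neighbors of $u$. This is not correct. $C$ is the $G_0$-connected component of $u$, and it may contain vertices $w$ reachable from $u$ only via a chain of short ($\le \alpha/n$) edges, with $|uw|$ itself well above $\alpha/n$. Lemma~\ref{lem:clique} only tells you that every such $w$ is a $G$-neighbor of $u$ (i.e., $C$ induces a clique in $G$), not that it is a $G_0$-neighbor. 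Because your deterministic-consistency argument relies on all members of $C$ running {\tt SEQ-GREEDY} on \emph{the same} vertex set, this matters: with your rule, two nodes $u, u' \in C$ could compute different sets $\{u\} \cup N_0(u) \neq \{u'\} \cup N_0(u')$ and therefore inconsistent spanners.

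The fix is minor and is what the paper actually does: after the one-hop exchange, $u$ has enough information (positions, or equivalently pairwise distances among $N[u]$) to reconstruct $G_0$ restricted to its closed one-hop neighborhood $N[u]$; $u$ then takes the $G_0$-connected component of $u$ \emph{within that induced subgraph}, which by Lemma~\ref{lem:clique} equals $C$ exactly. With this correction, every vertex of $C$ recovers the same $C$ and the rest of your argument (local {\tt SEQ-GREEDY}, one announcement round, $O(1)$ total rounds) goes through and matches the paper. One cosmetic note: the paper assumes nodes know pairwise distances rather than coordinates; this changes nothing here, since a node can instead broadcast its list of neighbor IDs and distances in the first round.
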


\subsection{Distributed Processing of Long Edges}
\label{subsection:longerEdges.distr}
In this section, we show how long edges, that is, edges in $E_i$, $i
> 0$, can be processed in a distributed setting. The first step of
this process is the computation of a cluster cover for the spanner
$G'_{i-1}$ updated at the end of the previous phase.

\subsubsection{Distributed Cluster Cover for $G'_{i-1}$}
Recall that in this step our goal is to compute a cluster cover
$\{C_{u_1}, C_{u_2}, \ldots\}$ of $G'_{i-1}$ of radius $\delta
W_{i-1}$. To do this, each node $u$ first identifies all nodes $v$
in $G$ satisfying $\ssp_{G'_{i-1}}(u, v) \le \delta W_{i-1}$. Using
arguments similar to those in Section \ref{sec:query.answer}, we can
show that any node $v$ satisfying $\ssp_{G'_{i-1}}(u, v) \le \delta
W_{i-1}$ must be at most $2\delta W_{i-1}/\alpha$ hops from $u$.
So each node $u$ constructs the subgraph of $G'_{i-1}$ induced by
nodes that are at most $2 \delta W_{i-1}/\alpha$ hops away from it
in $G$.
Node
$u$ then runs a (sequential) single source shortest path algorithm
with source $u$ on the local view of $G'_{i-1}$ it has obtained and
identifies all nodes $v$ satisfying $\ssp_{G'_i}(u, v)
\le \delta W_{i-1}$.

At the end of the above process, every node $u$ in the network is
a cluster center.
We now force some nodes to cease being cluster centers, so that all
pairs of cluster centers are far enough from each other. Let $J$ be
the graph with vertex set $V$ and whose edges $\{x, y\}$ are such
that $x \in C_y$ (and by symmetry, $y \in C_x$).

\begin{lemma}
$J$ is a UBG that resides in a metric space of constant doubling
dimension. \label{lem:j.growth.bounded}
\end{lemma}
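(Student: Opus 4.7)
The plan is to exhibit a Euclidean embedding of $V$ that witnesses $J$ being a (quasi) unit ball graph, and then invoke the standard fact that $d$-dimensional Euclidean space has doubling dimension $O(d)$.

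After each node runs the local single-source shortest path computation just described, the ball around each potential center $u$ is $C_u = \{v \in V : \ssp_{G'_{i-1}}(u,v) \le \delta W_{i-1}\}$, so the edges of $J$ are exactly the pairs $\{x,y\}$ with $\ssp_{G'_{i-1}}(x,y) \le \delta W_{i-1}$ (the condition is manifestly symmetric). I would embed $V$ in $d$-dimensional Euclidean space using the positions supplied by the $\alpha$-UBG assumption and take ordinary Euclidean distance as the underlying metric.

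For the ``no edge'' direction of the UBG property, the graph $G'_{i-1}$ is a spanning subgraph of $G$ in which each edge has weight equal to its Euclidean length; the graph metric therefore dominates the Euclidean metric, so any edge $\{x,y\}$ of $J$ satisfies $|xy| \le \ssp_{G'_{i-1}}(x,y) \le \delta W_{i-1}$. For the ``edge'' direction I would set the lower threshold $\alpha' = \min\{\alpha,\,\delta W_{i-1}/t\}$. If $|xy| \le \alpha'$, then $\{x,y\} \in E$ by the $\alpha$-UBG definition, and since $|xy| < W_{i-1}$ the edge lies in $E_j$ for some $j \le i-1$. Applying the $t$-spanner property of $G'_{i-1}$ (Theorem~\ref{thm:relaxed.spanner} restricted to the prefix $G_{i-1}$) gives $\ssp_{G'_{i-1}}(x,y) \le t\cdot |xy| \le \delta W_{i-1}$, so $\{x,y\}$ is indeed an edge of $J$.

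To finish, I would check that the ratio $\delta W_{i-1}/\alpha'$ is bounded by an absolute constant. Since $W_{i-1} \le W_m \le r$, this ratio is at most $\max\{t,\, \delta r/\alpha\}$, which is $O(1)$ under the fixed values of $\alpha$, $\delta$, $r$, and $t$. Hence $J$ is a $\beta$-UBG for a constant $\beta>0$ embedded in $d$-dimensional Euclidean space, and that space has doubling dimension $O(d)=O(1)$. The main subtlety is the case split between the regimes $W_{i-1} \le t\alpha/\delta$ (where the lower threshold is forced down by the spanner blow-up factor $t$) and $W_{i-1} > t\alpha/\delta$ (where it is governed directly by the $\alpha$ of the $\alpha$-UBG); once both regimes are verified to give a constant-ratio threshold pair, the conclusion is immediate.
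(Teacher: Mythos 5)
Your proof is correct but takes a genuinely different route from the paper's. You keep the ambient $d$-dimensional Euclidean metric (inherited from the $\alpha$-UBG embedding) and argue that $J$ is a \emph{quasi}-UBG there, with lower threshold $\alpha' = \min\{\alpha, \delta W_{i-1}/t\}$ and upper threshold $\delta W_{i-1}$; the constant ratio between these two thresholds is what matters, and the constant doubling dimension of $\mathbb{R}^d$ then comes for free. The paper instead defines a \emph{new} metric on $V$ by $w(x,y) = \ssp_{G'_{i-1}}(x,y)$, under which $J$ is a strict UBG with a single threshold $\delta W_{i-1}$ by the very definition of the clusters, and then must separately argue (by a Euclidean packing argument on ball centers) that this shortest-path metric has constant doubling dimension. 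The two arguments are closely related: the paper's case analysis showing that any two ball centers $a,b$ satisfy $|ab| \ge \min\{\alpha, 1/\delta, R/(2t)\}$ is essentially the contrapositive of your lower-threshold argument, and both hinge on exactly the same invariant — that $G'_{i-1}$ is a $t$-spanner of the already-processed prefix, so that small Euclidean distance forces small $\ssp_{G'_{i-1}}$-distance. Your packaging is arguably cleaner because you never need to establish doubling dimension for a nonstandard metric, at the modest cost of concluding ``quasi-UBG'' rather than ``UBG''; that is harmless here since the Kuhn--Moscibroda--Wattenhofer MIS algorithm only needs bounded growth, and indeed the paper's own proof of the companion Lemma~\ref{lem:j.growth.bounded2} likewise settles for a qUBG. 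One minor point of rigor: your appeal to ``Theorem~\ref{thm:relaxed.spanner} restricted to the prefix $G_{i-1}$'' is really an appeal to the inductive invariant $\ssp_{G'_j}(x,y) \le t\cdot|xy|$ for $\{x,y\} \in E_j$, $j \le i-1$, established inside that theorem's proof, rather than to its statement as given; the paper's own proof leans on precisely the same invariant, so this is fair, but it is worth making the dependence explicit.
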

\begin{proof}
For any edge $\{x, y\}$ in $J$, we have that $x \in C_y$ and
therefore $\ssp_{G'_{i-1}}(x, y) \le \delta W_{i-1}$. Assign to
every pair of nodes $\{x, y\}$ in $V$ a weight $w(x, y) =
\ssp_{G'_{i-1}}(x, y)$. The weights $w$ form a metric simply because
shortest path distances in any graph form a metric. Thus $J$ is a
graph whose nodes reside in a metric space and whose edges connect
pairs of nodes separated by distance of at most $\delta W_{i-1}$ (in
the metric space). By scaling the quantity $\delta W_{i-1}$ up to
one, we see that $J$ is a UBG in the underlying metric space defined
by the weights $w$. Recall from \cite{KuhnMoscibrodaWattenhofer}
that the {\em doubling dimension} of a metric space is the smallest
$\rho$ such that every ball can be covered by at most $2^\rho$ balls
of half the radius. To see that the metric space induced by the
weights $w$ has constant doubling dimension, start with a ball of
$B$ radius $R$ centered at an arbitrary vertex $u$. Every vertex $v$
in ball $B$ satisfies $\ssp_J(u, v) \le R$. Now cover the vertices
in $B$ using balls of radius $R/2$ as follows: repeatedly pick an
uncovered vertex $v$ in $B$ and grow a ball of radius $R/2$ centered
at $v$, until all vertices have been covered. We now show that the
number of balls of radius $R/2$ is constant.

Let $a$ and $b$ be two arbitrary centers of different balls of
radius $R/2$. Then $\ssp_J(u, v) > R/2$, otherwise $a$ and $b$ would
belong to the same ball of radius $R/2$. We distinguish three
situations:
\begin{itemize}
\item $\{a, b\}$ is not an edge in $G$. This implies that $|ab| >
\alpha$.
\item $\{a, b\}$ is an edge in $G$ that has not been processed prior
to phase $i$. This implies that $|ab| > W_{i-1} \ge 1/\delta$ (after
scaling $\delta W_{i-1}$ up to $1$).
\item $\{a, b\}$ is an edge in $G$ that has been processed prior
to phase $i$. This implies that $G'_{i-1}$ contains a $t$-spanner
path from $a$ to $b$ and therefore $|ab| \ge \ssp_{G'_{i-1}}(a, b) /
t \ge R/2t$.
\end{itemize}
We have established that $|ab| \ge \min\{\alpha, \frac{1}{\delta},
\frac{R}{2t}\}$, so no two ball centers can be too close to each
other. It follows that the number of balls of radius $R/2$ that fit
inside $B$ is constant, proving the lemma true.
\end{proof}

Let $I$ be an MIS of $J$ constructed using the MIS algorithm in
\cite{KuhnMoscibrodaWattenhofer}.
This algorithm runs in $O(\log^* n)$ communication rounds on a UBG
that resides in a metric space of constant doubling dimension.
Then each node in $V \setminus I$ has one or more neighbors
in $I$. Each node $u \in I$ is declared a cluster center, and each
node $v \in V \setminus I$ attaches itself to the neighbor in $I$
with the highest identifier.
This gives us the desired cluster cover of radius $\delta W_{i-1}$.

\begin{theorem}
A cluster cover of $G'_{i-1}$ of radius $\delta W_{i-1}$ can be
computed in $O(log^* n)$ rounds of communication.
\end{theorem}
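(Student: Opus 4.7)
The plan is to implement the sequential cluster-cover procedure from Section~\ref{sec:cluster.cover} in a distributed fashion by (a) having every node locally compute a tentative cluster via a bounded-hop exploration of $G'_{i-1}$, and then (b) thinning out cluster centers using a maximal independent set in the auxiliary graph $J$ defined just before Lemma~\ref{lem:j.growth.bounded}. The overall round count then comes from adding an $O(1)$ cost for the local exploration to the $O(\log^* n)$ cost of the Kuhn--Moscibroda--Wattenhofer MIS algorithm on $J$.

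First I would establish that the local view needed by each node can be assembled in $O(1)$ rounds of communication in $G$. The argument mirrors the one just before the statement of Theorem $\lceil 2(2\delta+1)/\alpha\rceil$: any path in $G'_{i-1}$ of total weight at most $\delta W_{i-1}$ must, in $G$, have at most $2\delta W_{i-1}/\alpha$ hops, since any two vertices two hops apart in $G$ are at Euclidean distance at least $\alpha$. Because $W_{i-1}\le 1$, this bound is $O(\delta/\alpha)=O(1)$ hops, independent of $i$. In that many rounds every node $u$ can learn the subgraph of $G'_{i-1}$ induced by its $O(1)$-hop neighborhood (together with all edge weights), run Dijkstra locally, and identify the set $C_u=\{v:\ssp_{G'_{i-1}}(u,v)\le \delta W_{i-1}\}$.

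Next I would invoke Lemma~\ref{lem:j.growth.bounded}: the auxiliary graph $J$ on vertex set $V$, with edge set $\{\{x,y\}:x\in C_y\}$, is a unit ball graph in a metric space of constant doubling dimension. Therefore the MIS algorithm of Kuhn, Moscibroda, and Wattenhofer~\cite{KuhnMoscibrodaWattenhofer} computes an MIS $I$ of $J$ in $O(\log^* n)$ rounds of communication on $J$. The cluster centers are declared to be the nodes in $I$; by maximality, every $v\in V\setminus I$ has at least one $J$-neighbor in $I$, and $v$ can attach itself in one additional round to, say, the $J$-neighbor in $I$ with the largest identifier. This yields a cluster cover of $G'_{i-1}$ of radius $\delta W_{i-1}$ whose centers are pairwise at $G'_{i-1}$-distance greater than $\delta W_{i-1}$ (otherwise they would be adjacent in $J$, contradicting independence of $I$).

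The one real obstacle is that $J$ is not a subgraph of $G$: an edge $\{x,y\}$ of $J$ may correspond to a multi-hop path in $G$. The KMW algorithm assumes that adjacent vertices can exchange messages in one round, so I have to argue that one round of communication along a $J$-edge can be simulated by $O(1)$ rounds along $G$-edges. This follows from the same hop bound used in step one: because $x\in C_y$ forces $\ssp_{G'_{i-1}}(x,y)\le \delta W_{i-1}$, the two endpoints are within $O(\delta/\alpha)=O(1)$ hops in $G$, and each node already knows a shortest path to every vertex in its local view. Hence a single round of the KMW MIS algorithm on $J$ costs $O(1)$ rounds in $G$, and the total round complexity is $O(1)+O(\log^* n)\cdot O(1)=O(\log^* n)$, as claimed.
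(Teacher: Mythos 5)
Your proposal follows essentially the same route as the paper: local bounded-hop exploration to form tentative clusters, followed by an MIS computation on the auxiliary graph $J$ (using Lemma~\ref{lem:j.growth.bounded} together with the Kuhn--Moscibroda--Wattenhofer algorithm) to thin out the cluster centers, with non-centers attaching to an adjacent center. Two remarks on the details. First, the sentence ``any path in $G'_{i-1}$ of total weight at most $\delta W_{i-1}$ must, in $G$, have at most $2\delta W_{i-1}/\alpha$ hops'' is not correct as stated: a path in $G'_{i-1}$ of small total weight can have arbitrarily many edges if those edges are short, and consecutive-but-one vertices on such a path may well be adjacent in $G$, so the ``two-hops-apart $\Rightarrow$ distance $>\alpha$'' fact does not apply to it. What the paper actually uses (and what you need) is the statement about \emph{shortest paths in $G$}: for any $w$ with $\ssp_{G'_{i-1}}(u,w)\le \delta W_{i-1}$ we have $\ssp_G(u,w)\le \delta W_{i-1}$, and a shortest $uw$-path in $G$ has $O(\delta W_{i-1}/\alpha)=O(1)$ hops because on a $G$-shortest path any two vertices two hops apart are necessarily non-adjacent and hence farther than $\alpha$. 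Applying this to every vertex on the $G'_{i-1}$-shortest paths out of $u$ shows that the entire relevant fragment of $G'_{i-1}$ sits inside an $O(1)$-hop $G$-neighborhood of $u$, which is what the local Dijkstra step requires. Second, your final paragraph, pointing out that $J$-edges are not $G$-edges and arguing that one round of the MIS algorithm on $J$ can be simulated by $O(1)$ rounds on $G$, addresses a point that the paper leaves entirely implicit; flagging and handling it is a genuine improvement in rigor, and the argument you give (both endpoints of a $J$-edge lie within $O(1)$ $G$-hops, so messages can be relayed) is the right idea.
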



\smallskip
\subsubsection{Distributed Query Edge Selection}
\label{sec:query.pairs.distr}

Only nodes that are cluster heads need to participate in the process
of selecting query edges. Each cluster head $a$ seeks to gather
information on all edges in $E_i$ between the cluster $C_a$ and any
other cluster $C_b$. Using the argument in Section
\ref{sec:query.answer}, we know that every node in $C_a$ is at most
$2\delta W_{i-1}/\alpha$ hops away from $a$ in $G$. Therefore, if
there is an edge $\{u, v\} \in E_i$, $u \in C_a$ and $v \in C_b$,
then $v$ is at most $1 + 2\delta W_{i-1}/\alpha$ hops away from $a$.
So $a$ gets information from nodes that are at most $1 + 2\delta
W_{i-1}/\alpha$ hops away from it and
it identifies all edges in $E_i[C_a,
C_b]$. Recall that this is the set of edges in $E_i$ which connect a
node in $C_a$ and a node in $C_b$. Node $a$ then discards all
covered edges from $E_i[C_a, C_b]$, leaving only candidate query
edges in $E_i$ between $C_a$ and $C_b$. Finally, from among the
candidate query edges, node $a$ selects an edge $\{u, v\}$ that
minimizes
$t \cdot |uv| - \ssp_{G'_{i-1}}(a,u) - \ssp_{G'_{i-1}}(b, v)$.
\begin{theorem}
Query edges from $E_i$ can be selected in $O(1)$ rounds of communication.
\end{theorem}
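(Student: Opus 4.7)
The plan is to cash in on the groundwork laid just before the statement: everything each cluster head needs to do is local, so the only task is to bound the radius of the neighborhood it must gather and to argue that the subsequent work is purely computational (no further communication).

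First, I would establish that the hop radius $1 + 2\delta W_{i-1}/\alpha$ appearing in the informal description is bounded by an absolute constant. Since every edge of $G$ has length at most $1$, the bin index satisfies $W_{i-1} \le W_{m-1} \le 1$ for all phases $i \le m$ in which any edge is processed. Combined with $\delta < 1$ and the fact that $\alpha$ is a fixed constant of the model, the number of hops is at most $1 + 2/\alpha = O(1)$. Thus a standard local flooding lets each node learn the topology of $G$ together with all relevant auxiliary data (node IDs, Euclidean distances, edges of $G'_{i-1}$, and the cluster assignments produced by the preceding step) within its constant-hop ball, using $O(1)$ rounds.

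Next, I would argue that this local view is sufficient for each cluster head $a$ to carry out the three sub-tasks of Section~\ref{sec:query.edges} without further communication. (i) Identifying $E_i[C_a, C_b]$: every edge $\{u,v\} \in E_i$ with $u \in C_a$ has $u$ within $2\delta W_{i-1}/\alpha$ hops of $a$ and $v$ one additional hop beyond, and $v$'s cluster label $b$ is already carried in the gathered data, so $a$ can enumerate all such edges and group them by $b$. (ii) Discarding covered edges: the witness $z$ in the definition of a covered edge is a $G'_{i-1}$-neighbor of an endpoint of $\{u,v\}$, which lies within one additional hop, hence inside $a$'s gathered ball; the geometric checks $|vz|\le\alpha$ and $\angle vuz \le \theta$ use only pairwise Euclidean distances, which are available to $a$. (iii) Selecting the minimizer of $t\cdot|uv| - \ssp_{G'_{i-1}}(a,u) - \ssp_{G'_{i-1}}(b,v)$: the two shortest-path lengths were already computed by $a$ and $b$ during the cluster-cover step, and can be piggybacked onto the flooding; ties are broken deterministically, say by the lexicographic order on endpoint IDs.

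The step I expect to need the most care is ensuring that the two cluster heads $a$ and $b$ agree on the single query edge chosen between their clusters, so that no duplicate gets inserted and the per-cluster bound of Lemma~\ref{lem:query.edges} carries over cleanly. The fix is that the $O(1)$-hop neighborhoods of $a$ and $b$ both contain all of $C_a \cup C_b$ as well as the $G'_{i-1}$-neighbors needed for the coverage tests, so they see exactly the same set $E_i[C_a,C_b]$ of candidate edges and the same values of $\ssp_{G'_{i-1}}(a,u)$ and $\ssp_{G'_{i-1}}(b,v)$; applying the same deterministic tie-breaking rule, they independently select the same edge. With this agreement in hand, the entire query-edge selection uses $O(1)$ rounds of flooding followed by purely local computation, which proves the theorem.
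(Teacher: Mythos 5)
Your proof is correct and follows the paper's approach: bound the gathering radius $1 + 2\delta W_{i-1}/\alpha$ by a constant (using $\delta < 1$, $W_{i-1}\le W_{m-1}<1$, and $\alpha$ fixed) and observe that all subsequent filtering and minimization is local computation on the gathered data. The concern you raise about $a$ and $b$ agreeing on the selected edge is a sensible extra precaution, though it is not actually needed for the paper's later arguments: even if the two cluster heads independently chose different minimizers of~(\ref{eq:query.pairs}), each pair of clusters would contribute at most two query edges, so Lemma~\ref{lem:query.edges} and the degree and weight bounds would only change by a factor of two.
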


\subsubsection{Distributed Construction of the Cluster Graph}
\label{sec:distr.cluster.graph}

As in the query edge selection step, only the cluster heads need to
perform actions to compute the cluster graph. Any member $u$ of a
cluster $C_a$ lies at most $2\delta W_{i-1}/\alpha$ hops away from
$a$ in $G$. Thus $a$ can identify intra-cluster edges incident on it
by gathering information from at most $2\delta W_{i-1}/\alpha$ hops
away. If $C_b$ is a cluster with $\ssp_{G'_{i-1}}(a, b) \le
W_{i-1}$, then node $a$ can identify the inter-cluster edge $\{a,
b\}$ by gathering information from at most $2W_{i-1}/\alpha$ hops
away. If $C_b$ is a cluster such that there is an edge $\{u, v\}$ in
$G'_{i-1}$ with $u \in C_a$ and $v \in C_b$, then node $a$ can
identify the inter-cluster edge $\{a, b\}$ by gathering information
from at most $2(2\delta + 1)W_{i-1}/\alpha$ hops away. Note that the
information that $a$ gathers contains a local view of $G'_{i-1}$
along with all pairwise distances. Using this information, node $a$
is able to run a single source shortest path algorithm with source
$a$ and determine the weights of all inter-cluster and intra-cluster
edges incident on $a$.

\begin{theorem}
Computing the cluster graph $H_{i-1}$ of $G'_{i-1}$ takes
$O(1)$ communication rounds.
\end{theorem}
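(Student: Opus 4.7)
The plan is to argue that each cluster center can reconstruct every cluster-graph edge incident on it, together with the correct weight, from information drawn only from a constant-radius neighborhood in $G$; once this is established, the whole cluster graph $H_{i-1}$ is produced in $O(1)$ communication rounds by a single gather-then-compute pass.

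First I would bound the hop-radius needed for each edge type. For an intra-cluster edge $\{a,u\}$ with $u \in C_a$, the earlier argument in Section \ref{sec:query.answer} (two vertices two hops apart in a $G$-path are at least $\alpha$ apart in Euclidean distance) gives that $u$ is at most $\lceil 2\delta W_{i-1}/\alpha \rceil$ hops from $a$ in $G$; since $W_{i-1}\le 1$ and $\alpha,\delta$ are constants, this is $O(1)$. For a type-(i) inter-cluster edge, the witness shortest $ab$-path in $G'_{i-1}$ has length at most $W_{i-1}$ and hence at most $\lceil 2W_{i-1}/\alpha\rceil=O(1)$ hops. For a type-(ii) inter-cluster edge, Lemma \ref{lem:intercluster.distance} gives $\ssp_{G'_{i-1}}(a,b) \le (2\delta+1)W_{i-1}$, so the witness path lies within $\lceil 2(2\delta+1)W_{i-1}/\alpha\rceil = O(1)$ hops of $a$.

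Next I would describe the distributed step: each cluster center $a$ gathers the induced subgraph of $G$ on the ball of radius $R_{\max}=\lceil 2(2\delta+1)W_{i-1}/\alpha\rceil$ around itself, decorated with (a) which edges currently belong to $G'_{i-1}$, (b) pairwise Euclidean distances, and (c) cluster-membership labels produced in the previous step. Node $a$ then runs Dijkstra's algorithm locally from source $a$ on this restricted copy of $G'_{i-1}$, reading off the weights of all intra-cluster edges $\{a,u\}$ (every $u\in C_a$ must appear by the radius bound above) and of all inter-cluster edges $\{a,b\}$ of either type (the witness path lies entirely inside the gathered ball). Since $R_{\max}=O(1)$, the gather stage finishes in $O(1)$ rounds and the local Dijkstra is not counted against the round complexity.

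The main point that must be checked carefully is that the gather stage can actually be carried out within $O(1)$ rounds in the $O(\log n)$-bit message model: a constant-radius Euclidean ball in a $d$-dimensional $\alpha$-UBG, after grouping vertices into $\alpha$-separated ``bags,'' contains $O((R_{\max}/\alpha)^d)=O(1)$ bags' worth of distinct node identifiers and adjacency entries, each of size $O(\log n)$, and so standard pipelined flooding delivers the entire ball information to $a$ in $O(R_{\max})=O(1)$ rounds. Once this is noted, the theorem follows immediately.
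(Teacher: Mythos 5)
Your proof follows the paper's own argument on all of its substantive points: you establish the same three constant hop-radius bounds (for intra-cluster edges, type-(i) inter-cluster edges, and type-(ii) inter-cluster edges), then have each cluster head gather the resulting $O(1)$-hop ball together with $G'_{i-1}$-membership and pairwise distances, and run Dijkstra locally to read off the incident edges of $H_{i-1}$ and their weights. That is exactly the paper's proof, and it is correct.

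The closing paragraph about the $O(\log n)$-bit message model is an addition of yours, and it has a gap. Partitioning the ball of radius $R_{\max}$ into $O((R_{\max}/\alpha)^d)$ cells of side $\alpha/\sqrt{d}$ bounds the number of cells, not the number of vertices: the $\alpha$-UBG model places no lower bound on inter-point distances, so a single cell can contain $\Theta(n)$ co-located nodes, and the ball can carry $\Theta(n\log n)$ bits of topology and distance information. A cluster head with only $O(1)$ direct neighbors would then need $\omega(1)$ rounds to receive it, so "pipelined flooding delivers the entire ball information in $O(R_{\max})$ rounds" does not follow. To your credit you have surfaced an issue the paper itself leaves implicit --- its proof of this theorem never argues that the gather step respects the $O(\log n)$-bit message bound --- but the "$\alpha$-separated bags" reasoning does not close that hole; one would need a bounded-density assumption, or an argument that only an $O(\log n)$-bit summary per incident $H_{i-1}$-edge (rather than the whole ball) must reach each head.
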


\subsubsection{Answering Shortest Path Queries}

Each node $u$ knows all the query edges incident on it. As proved in
Section \ref{sec:query.answer}, node $u$ only needs to gather
information from nodes that are at most a constant number of hops
away, to be able to determine locally, for all incident query edges
$\{u, v\} \in E_i$, whether $\ssp_{H_{i-1}}(u, v) \le t \cdot |uv|$.
Thus, after constant number of communication rounds, $u$ knows the
subset of incident query edges $\{u, v\}$ for which
$\ssp_{H_{i-1}}(u, v) > t \cdot |uv|$ and $u$ identifies these as
the incident edges to be added to $G'_i$.

\begin{theorem}
Answering shortest path queries takes $O(1)$ communication rounds.
\end{theorem}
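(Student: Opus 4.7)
The plan is to reduce the shortest-path query on $H_{i-1}$ to a purely local computation at each endpoint of a query edge, using the bounds already established in Section~\ref{sec:query.answer}.

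First, I would recall the two structural facts that do all the real work. By Lemma~\ref{lemma:constantH}, whenever $\ssp_{H_{i-1}}(x,y)\le t\cdot|xy|$ there is a witnessing $xy$-path in $H_{i-1}$ with at most $2+\lceil tr/\delta\rceil$ hops. By the theorem immediately following it, each hop of such a path corresponds to a subpath in $G$ of at most $\lceil 2(2\delta+1)/\alpha\rceil$ hops, since any edge of $H_{i-1}$ has $\ssp_{G'_{i-1}}(\cdot,\cdot)\le(2\delta+1)W_{i-1}$ and consecutive non-adjacent points on a shortest path in $G$ are at Euclidean distance at least $\alpha$ apart. Composing these two bounds, any $H_{i-1}$-witness of a $t$-spanner path for $\{x,y\}$ lies entirely within a constant number of hops of $x$ in $G$.

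Given this, the distributed implementation is straightforward. Each node $u$ with an incident query edge $\{u,v\}\in E_i$ gathers the topology (with pairwise Euclidean distances, cluster membership, and cluster-graph edge weights already computed in Section~\ref{sec:distr.cluster.graph}) of its constant-radius neighborhood in $G$. Because intra- and inter-cluster edge weights are shortest-path distances in $G'_{i-1}$, and these weights were assembled at the corresponding cluster heads during the cluster-graph construction step, $u$ can locally reconstruct the portion of $H_{i-1}$ that contains all candidate witness paths for the queries on its incident edges. Node $u$ then runs any sequential shortest-path computation on this local copy of $H_{i-1}$ and, for each incident query edge $\{u,v\}$, decides whether $\ssp_{H_{i-1}}(u,v)\le t\cdot|uv|$; the edges for which the inequality fails are added to $G'_i$.

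The only thing that could go wrong is that the witness path might exit $u$'s local view, but the two bounds above rule this out: the hop count in $H_{i-1}$ is $O(1)$ and the hop expansion per cluster-graph edge when unfolded into $G$ is also $O(1)$, so the total number of rounds of flooding needed at $u$ is a constant depending only on $t,r,\delta,\alpha,d$. Therefore the entire step completes in $O(1)$ communication rounds, as claimed.
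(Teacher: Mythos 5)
Your proposal matches the paper's own argument: the paper's proof simply invokes Section~\ref{sec:query.answer}, which supplies exactly the two bounds you compose (Lemma~\ref{lemma:constantH} for the $O(1)$ hop count in $H_{i-1}$, plus the $\lceil 2(2\delta+1)/\alpha\rceil$ hop expansion of each $H_{i-1}$-edge in $G$), to conclude that each endpoint of a query edge can locally reconstruct all candidate witness paths after gathering information from a constant-hop neighborhood. Your version just spells out the composition and the local reconstruction step more explicitly; there is no substantive difference in approach.
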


\subsubsection{Distributed Removal of Redundant Edges}
Two edges $\{u, v\}$ and $\{u', v'\}$ in $G'_i$ are mutually
redundant if (i) $\ssp_{H_{i-1}}(u, u') + |u'v'| +
\ssp_{H_{i-1}}(v', v) \le t_1 \cdot |uv|$ and
 (ii) $\ssp_{H_{i-1}}(u', u) + |uv| +
\ssp_{H_{i-1}}(v, v') \le t_1 \cdot |u'v'|$.
Each node $u$ takes charge of all edges $\{u, v\}$ added to $G_i$ in
phase $i$ and for which the identifier of $u$ is higher than the
identifier of $v$. For each such edge $\{u, v\}$ that $u$ is in
charge of, $u$ determines all edges $\{u', v'\}$ such that $\{u,
v\}$ and $\{u', v'\}$ form a mutually redundant pair. Note that the
nodes $u$ and $v'$ are a constant number of hops away from each
other in $G$, and similarly for nodes $v$ and $u'$.
Node $u$ then contributes to the construction of the graph $J$ by
adding to $V(J)$ a vertex for each redundant edge $u$ is in charge
of, and to $E(J)$ an edge connecting nodes in $V(J)$ that correspond
to mutually redundant edges in $G_i$.
We now show the following property of $J$:

\begin{lemma}
$J$ is a UBG that resides in a metric space of constant doubling
dimension. \label{lem:j.growth.bounded2}
\end{lemma}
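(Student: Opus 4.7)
The plan is to mirror the strategy of Lemma~\ref{lem:j.growth.bounded}: embed the vertices of $J$ into a metric space of constant doubling dimension, and show that every edge of $J$ joins a pair at bounded distance in it. For each vertex of $J$, corresponding to an edge $e = \{u, v\}$ added to $G'_i$ in phase $i$, I would assign the Euclidean midpoint $p(e) = (u + v)/2 \in \R^d$ and measure distance between edges by $|p(e) - p(e')|$. Since $\R^d$ has doubling dimension $O(d)$, any subset inherits constant doubling dimension automatically, so choosing the ambient metric is the easy half of the lemma.

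The first real step is to bound the midpoint distance between mutually redundant edges. If $e = \{u, v\}$ and $e' = \{u', v'\}$ are mutually redundant, condition (i) yields $\ssp_{H_{i-1}}(u, u') \le t_1 |uv| \le t_1 r W_{i-1}$ and condition (ii) yields $\ssp_{H_{i-1}}(v, v') \le t_1 |u'v'| \le t_1 r W_{i-1}$. Lemma~\ref{lem:H.paths} then gives the same bounds on $\ssp_{G'_{i-1}}(u, u')$ and $\ssp_{G'_{i-1}}(v, v')$, and since Euclidean distance is dominated by graph distance in $G'_{i-1}$, I obtain $|u - u'|, |v - v'| \le t_1 r W_{i-1}$. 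The triangle inequality on midpoints then gives $|p(e) - p(e')| \le t_1 r W_{i-1}$, so every edge of $J$ is also an edge of the Euclidean ball graph on $V(J)$ at threshold $t_1 r W_{i-1}$.

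The second step, which I expect to carry most of the effort, is to verify that the ambient metric is sparse enough: only $O(1)$ vertices of $V(J)$ sit within any ball of radius $t_1 r W_{i-1}$ around a fixed midpoint $p(e)$. Any such $e' = \{u', v'\}$ has both endpoints within Euclidean distance $(t_1 r + r/2) W_{i-1}$ of $p(e)$, because $|p(e') - u'| = |p(e') - v'| = |u'v'|/2 \le r W_{i-1}/2$. Each endpoint lies in some phase-$i$ cluster, and an argument analogous to the one in Lemma~\ref{lem:j.growth.bounded} --- combining the inductive $t$-spanner property of $G'_{i-1}$ with $\ssp_{G'_{i-1}}(a, a') > \delta W_{i-1}$ for distinct cluster centers --- shows that any two centers are Euclidean-separated by more than $\delta W_{i-1}/t$. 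A standard volume packing bound then caps the number of centers intersecting the enlarged ball by a constant depending only on $d$, $t$, $t_1$, $r$, and $\delta$; Lemma~\ref{lem:query.edges} caps the number of query-edge endpoints per cluster by another constant. Multiplying yields $O(1)$ candidates for $u'$ and for $v'$, and therefore $O(1)$ candidates for $e'$.

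The conceptually subtle point, and the reason this is slightly more delicate than Lemma~\ref{lem:j.growth.bounded}, is that mutual redundancy is defined through shortest paths in $H_{i-1}$, not directly through a natural metric on edges; so the converse implication (close midpoints $\Rightarrow$ mutually redundant) need not hold, and $J$ is only a \emph{subgraph} of the Euclidean ball graph at threshold $t_1 r W_{i-1}$. This suffices: $J$ has constant maximum degree inside a metric of constant doubling dimension, which is exactly the structural hypothesis required to run the Kuhn--Moscibroda--Wattenhofer MIS algorithm~\cite{KuhnMoscibrodaWattenhofer} in $O(\log^* n)$ rounds, as needed in the surrounding subsection.
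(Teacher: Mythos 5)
Your approach is genuinely different from the paper's, and the core geometric intuition (embed edges as Euclidean midpoints, show that mutual redundancy forces midpoints close together, then pack) is sound. The paper instead constructs a custom metric on $V(J)$ by setting
$d_J(a,b) = \min\bigl(\ssp_{H_{i-1}}(u_a,u_b)+\ssp_{H_{i-1}}(v_a,v_b),\ \ssp_{H_{i-1}}(u_a,v_b)+\ssp_{H_{i-1}}(v_a,u_b)\bigr)$,
verifies that $d_J$ is a metric, and proves its doubling dimension is constant by a ball-covering argument that piggybacks on the doubling dimension of $\ssp_{H_{i-1}}$. The reason the authors go to this trouble is precisely the concession you flag at the end: with Euclidean midpoints you only get ``edge in $J$ $\Rightarrow$ midpoints close,'' i.e., $J$ is a \emph{subgraph} of a unit ball graph, not a UBG. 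The metric $d_J$ is engineered so that the converse also holds: a short computation from the redundancy conditions and $W_{i-1} < |u_a v_a|, |u_b v_b| \le rW_{i-1}$ shows that $d_J(a,b)\le(t_1-r)W_{i-1}$ forces both conditions (i) and (ii), so $J$ is a genuine $\frac{t_1-r}{t_1 r -1}$-qUBG after rescaling. Your proposal therefore does not establish the lemma as literally stated.

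That said, what you do establish --- midpoints well-spread in $\mathbb{R}^d$ (any ball of radius $O(W_{i-1})$ holds $O(1)$ of them) and every edge of $J$ joining midpoints within $O(W_{i-1})$ --- is enough to conclude that the $r$-hop neighborhood of any vertex of $J$ lies in a Euclidean ball containing $O(r^d)$ midpoints, i.e.\ $J$ has polynomially bounded independence. That is the property the Kuhn--Moscibroda--Wattenhofer algorithm actually exploits, so the downstream $O(\log^* n)$ MIS claim still goes through. One correction: ``constant maximum degree'' is \emph{not} the KMW hypothesis --- bounded degree alone does not imply bounded growth (an $r$-hop ball can contain $\Delta^{r}$ vertices), and a bounded-degree graph admits $O(\log^* n)$ MIS by other means (e.g.\ coloring-based symmetry breaking), not by KMW. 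Your packing argument is what saves you, so lean on that rather than on degree. If you want your argument to prove the stated lemma, rather than merely suffice for the application, you would need to pass to a $d_J$-style metric that internalizes the redundancy conditions --- Euclidean midpoints are too lossy to recover the lower-bound direction of the UBG definition.
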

\begin{proof}
Let $a$ and $b$ be vertices in $J$ corresponding to edges $\{u_a,
v_a\}$ and $\{u_b, v_b\}$ in $G'_i$. Assign to the vertex pair $(a,
b)$ a weight equal to
\begin{eqnarray}
\nonumber d_J(a, b) & =  & \min(\ssp_{H_{i-1}}(u_a,u_b) +
\ssp_{H_{i-1}}(v_a, v_b), \\
\nonumber   &   &            ~~~\ssp_{H_{i-1}}(u_a,v_b) +
\ssp_{H_{i-1}}(v_a, u_b)).
\end{eqnarray}
First we show that the weights defined by $d_J$ form a metric.
Clearly $d_J(a,a) = 0$ and $d_J(a, b) = d_J(b, a)$. To prove the
triangle inequality, consider three vertices $a, b, c \in J$. Assume
w.l.o.g. that
\begin{tabbing}
.......\=...............\=.......\=............\=...\kill
  \> \(d_J(a,b)\) \> \(=\) \> \(\ssp_{H_{i-1}}(u_a,u_b) + \ssp_{H_{i-1}}(v_a,v_b)\) \\
  \> \(d_J(b,c)\) \> \(=\) \> \(\ssp_{H_{i-1}}(u_b,u_c) + \ssp_{H_{i-1}}(v_b, v_c)\)
\end{tabbing}
\begin{figure}[htpb]
\centerline{
\includegraphics[width = 0.9\linewidth]{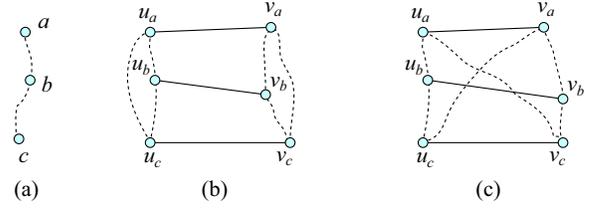}}
\caption{$d_J$ is a metric: (a) Nodes $a, b, c \in J$ correspond to
edges $\{u_a, u_b\}$, $\{u_b, u_c\}$, $\{u_a, u_c\} \in G'_i$. (b)
$d_J(a, c) = \ssp_{H_{i-1}}(u_a,u_c) + \ssp_{H_{i-1}}(v_a,v_c)$ (c)
$d_J(a, c) = \ssp_{H_{i-1}}(u_a,v_c) + \ssp_{H_{i-1}}(v_a,u_c)$.}
\label{fig:triangle}
\end{figure}
We identify two possible scenarios:
\begin{enumerate}
\item [(1)] $d_J(a, c) = \ssp_{H_{i-1}}(u_a,u_c) + \ssp_{H_{i-1}}(v_a,v_c)$ (see Figure~\ref{fig:triangle}b).
Since $\ssp$ is itself a metric, it follows immediately that $d_J(a,
c) \le d_J(a,b) + d_J(b, c)$.
\item [(2)] $d_J(a, c) = \ssp_{H_{i-1}}(u_a,v_c) + \ssp_{H_{i-1}}(v_a,u_c)$ (see Figure~\ref{fig:triangle}c).
Then it must be that $d_J(a,c) \le \ssp_{H_{i-1}}(u_a,u_c)$ $+$
$\ssp_{H_{i-1}}(v_a, v_c) \le d_J(a,b) + d_J(b, c)$, cf. scenario
(1).
\end{enumerate}
We have shown that $d_J$ defines a metric. We now show that $J$ is a
quasi-UBG residing in the metric space defined by $d_J$. For each
edge $\{a, b\}$ in $J$, the following redundancy conditions hold:
\begin{itemize}
\item [(a)] $\ssp_{H_{i-1}}(u_a, u_b) + \ssp_{H_{i-1}}(v_b, v_a) \le
t_1 \cdot |u_a v_a| - |u_b v_b|$
\item [(b)] $\ssp_{H_{i-1}}(u_b, u_a) + \ssp_{H_{i-1}}(v_a, v_b) \le t_1 \cdot
|u_b v_b| - |u_a v_a|$
\end{itemize}
Recall that $\{u_a, v_a\}$ and $\{u_b, v_b\}$ are both in $E_i$, for
some $i \ge 0$. This implies that their lengths differ by a factor
of $r$ at the most: $W_{i-1} < |u_a v_a| \le r \cdot W_{i-1}$ and
$W_{i-1} < |u_b v_b| \le r \cdot W_{i-1}$. Thus the right hand side
of inequalities (a) and (b) above is a quantity that lies in the
interval $((t_1-r)W_{i-1}, (t_1r-1)W_{i-1})$. By scaling
$(t_1r-1)W_{i-1}$ up to one we can say that $J$ is an
$\frac{t_1-r}{t_1r-1}$ - qUBG in the underlying metric space defined
by $d_J$.

It remains to show that the metric space defined by $d_J$ has
constant doubling dimension. Throughout the rest of the proof we use
$B_J$ ($B_H$) to denote a ball in the metric space defined by $d_J$
($\ssp_{H_{i-1}}$).

\begin{figure}[htpb]
\centerline{
\includegraphics[width = 0.9\linewidth]{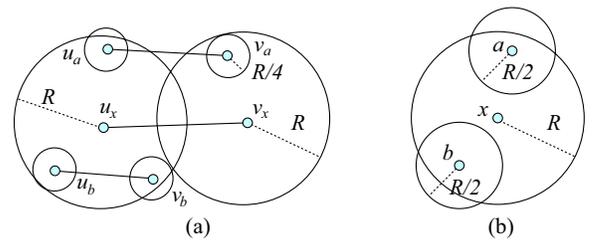}}
\caption{(a) The metric space defined by (a) $\ssp_{H_{i-1}}$ and
(b) $d_J$, has doubling dimension.} \label{fig:metric}
\end{figure}

Consider a ball $B_J(x, R)$ of radius $R$ centered at an arbitrary
vertex $x \in J$ corresponding to edge $\{u_x, v_x\} \in H_{i-1}$.
Let $a, b \in J$ be such that $d_J(a, b) > R/2$ (see
Figure~\ref{fig:metric}). Assume w.l.o.g that $d_J(a, b)$ $= $
$\ssp_{H_{i-1}}(u_a, u_b)$ $+$ $\ssp_{H_{i-1}}(v_a, v_b)$. Then at
least one of the following must be true:
\begin{itemize}
\item [(i)] $\ssp_{H_{i-1}}(u_a, u_b) > R/4$.
\item[(ii)] $\ssp_{H_{i-1}}(v_a, v_b) > R/4$.
\end{itemize}
We use these observations, along with the fact that $\ssp_{H_{i-1}}$
defines a metric space of constant doubling dimension, to show that
$d_J$ defines a metric space of constant doubling dimension.



To cover all vertices in $B_J(x, R)$, do the following repeatedly:
(i) pick an uncovered vertex $a \in J$ (ii) grow a ball $B_J(a,
R/2)$, and (iii) grow two balls $B_H(u_a, R/4)$ and $B_H(v_a, R/4)$
in $H_{i-1}$, where $a = \{u_a, v_a\}$.

Arguments similar to the ones used in
Lemma~\ref{lem:j.growth.bounded} show that, for any ball centers
$u_a, u_b \in B_H(u_x, R)$, we have that $|u_au_b| \ge \min\{\alpha,
\frac{1}{\delta}, \frac{R}{2t}\}$. Since no two ball centers can be
too close to each other, it follows that $B_H(u_x, R)$ gets covered
by a constant number of balls of radius $R/4$, and similarly for
$B_H(v_x, R)$. Conform observation (ii) above, corresponding to each
uncovered $a \in J$, there is an uncovered vertex $u_a$ or $v_a$ in
$H_{i-1}$. These together show that the number of balls covering
$B_J(x, R)$ is constant, thus completing the proof.
\end{proof}

\smallskip
\noindent
Let $I$ be an MIS of $J$ constructed using the MIS
algorithm in \cite{KuhnMoscibrodaWattenhofer} that takes $O(\log^*
n)$ communication rounds on a UBG that resides in a metric space of
constant doubling dimension. Each node $u$ then removes from $G_i$
all incident edges in $V(J) \setminus I$.

\begin{theorem}
Removing redundant edges takes $O(\log^* n)$ communication rounds.
\end{theorem}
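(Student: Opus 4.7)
The plan is to show that every operation needed to construct the auxiliary graph $J$, run the MIS algorithm on $J$, and update $G_i$ can be realized in $O(\log^* n)$ rounds of communication on the underlying network $G$. The dominant cost will come from the MIS computation; everything else will take only $O(1)$ rounds.

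First I would argue that each node has the information it needs to identify its own redundant pairs locally. Suppose node $u$ is in charge of an edge $\{u,v\}$ added in phase $i$, and $\{u',v'\}$ is any edge that forms a mutually redundant pair with $\{u,v\}$. The redundancy condition $\ssp_{H_{i-1}}(u,u') + |u'v'| + \ssp_{H_{i-1}}(v',v) \le t_1 \cdot |uv|$, together with $|uv| \le W_i = r W_{i-1}$, gives $\ssp_{H_{i-1}}(u,u') \le t_1 r W_{i-1}$, and hence (by the same arguments used in Section~\ref{sec:query.answer} and Section~\ref{sec:distr.cluster.graph}) a bound of $O(1)$ hops in $G$ between $u$ and $u'$; symmetrically for $v$ and $v'$. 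So each node $u$ only needs to gather topology, distances, and $\ssp_{H_{i-1}}$ values from within a constant-radius neighborhood in $G$ in order to enumerate all redundant pairs it is in charge of; this costs $O(1)$ rounds.

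Next I would show how $J$ is built and how the MIS algorithm is simulated. Each node $u$ contributes vertices of $V(J)$ corresponding to the edges of $G_i$ it is in charge of, and adds the $J$-edges induced by the mutually redundant pairs it has detected. Note that each vertex of $J$ is anchored to a physical network node (its ``owner''), and for every $J$-edge $\{a,b\}$ the endpoints of the corresponding $G_i$-edges lie within $O(1)$ hops of each other in $G$, as shown above. Hence any single round of the Kuhn--Moscibroda--Wattenhofer MIS algorithm running on $J$ can be emulated by $O(1)$ rounds of communication on $G$: owners of $J$-vertices simply forward the algorithm's messages through their constant-radius $G$-neighborhoods. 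By Lemma~\ref{lem:j.growth.bounded2}, $J$ is a UBG residing in a metric space of constant doubling dimension, so the MIS algorithm of \cite{KuhnMoscibrodaWattenhofer} terminates in $O(\log^* n)$ rounds of its own, giving $O(\log^* n)$ rounds on $G$.

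Finally, once the MIS $I$ has been computed, each node $u$ locally knows which of its incident edges in $G_i$ correspond to vertices of $J$ that are not in $I$, and drops those edges; a single round of communication suffices for both endpoints to agree on the deletion. Summing the contributions yields the claimed $O(\log^* n)$ round complexity. The main obstacle in the argument is the faithful simulation of the MIS algorithm on the virtual graph $J$ by the physical nodes of $G$, which is handled by combining the $O(1)$-hop locality of redundant pairs with Lemma~\ref{lem:j.growth.bounded2}.
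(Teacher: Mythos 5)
Your proposal is correct and follows essentially the same approach as the paper: establish that mutually redundant pairs of edges lie within a constant number of hops of each other in $G$ so that the conflict graph $J$ can be built locally in $O(1)$ rounds, invoke Lemma~\ref{lem:j.growth.bounded2} to conclude that $J$ is a UBG in a metric space of constant doubling dimension, run the Kuhn--Moscibroda--Wattenhofer MIS algorithm in $O(\log^* n)$ rounds, and locally discard edges outside the MIS. You are somewhat more explicit than the paper about simulating each round of the MIS algorithm on the virtual graph $J$ by $O(1)$ rounds on the physical network $G$, which is a detail the paper leaves implicit but is worth spelling out.
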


\section{Future work}
The results presented in this paper apply to $\alpha$-UDGs embedded
in constant-dimension Euclidean spaces, and do not directly
generalize to doubling metric spaces. For low dimensional doubling
metric spaces, we believe it possible to construct an $O(\log n
\log^* n)$ distributed algorithm that produces a
$(1+\varepsilon)$-spanner with constant maximum degree. However, new
techniques may be needed for lightweight spanners; the techniques
presented in this paper use a key property (the leapfrog property)
that does not seem to generalize to metrics of doubling dimension.


\def\cprime{$'$}

\end{document}